\let\norm\undefined 
\DeclarePairedDelimiter\norm{\lVert}{\rVert}
\DeclarePairedDelimiter\abs{\lvert}{\rvert}
\DeclarePairedDelimiter\taxicab{\lvert}{\rvert}%
\DeclareMathOperator*{\argmin}{\arg\!\min}
\def\wideparen#1{\mathop{\vbox{\ialign{##\crcr\noalign{\kern3\p@}
         \downparenthfill\crcr\noalign{\kern3\p@\nointerlineskip}
         $\hfil\displaystyle{#1}\hfil$\crcr}}}\limits}
\def\cents{\hbox{\rm\rlap/c}}
\newtheorem{thm}{Theorem}
\newtheorem{lem}{Lemma}
\newtheorem{cor}{Corollary}
\newproof{proof}{Proof}
\providecommand{\customgenericname}{}
\newcommand{\newcustomtheorem}[2]{%
  \newenvironment{#1}[1]
  {%
   \renewcommand\customgenericname{#2}%
   \renewcommand\theinnercustomgeneric{##1}%
   \innercustomgeneric
  }
  {\endinnercustomgeneric}
}
\newcommand{\mlless}{\textsc{MLLess}}
\newenvironment{itquote}
  {\begin{quote}\itshape}
  {\end{quote}\ignorespacesafterend}
\begin{document}

\begin{frontmatter}

\title{\mlless: Achieving Cost Efficiency in Serverless Machine Learning Training}

\author[urv]{Pablo Gimeno Sarroca}
\ead{pablo.gimeno@urv.cat}

\author[urv]{Marc S\'{a}nchez-Artigas\corref{correspondingauthor}}
\cortext[correspondingauthor]{Corresponding author}
\ead{marc.sanchez@urv.cat}

\address[urv]{Computer Science and Maths, Universitat Rovira i Virgili (Spain)}

\begin{abstract}
Function-as-a-Service (FaaS) has raised a growing interest in how to 
``tame'' serverless computing to enable domain-specific use cases such as data-intensive applications and machine learning (ML), to name a few. Recently, several systems have been implemented for 
training ML models. Certainly, these research articles are significant steps 
in the correct direction. However, they do not completely answer the nagging question of when
serverless ML training can be more cost-effective compared to traditional ``serverful''
computing. To help in this endeavor, we 
propose \mlless, a FaaS-based ML training prototype built atop
IBM Cloud Functions. To boost cost-efficiency, \mlless\phantom{} 
implements two innovative optimizations tailored to the traits of serverless computing: on one hand, a significance filter, to make indirect
communication more effective,  and on the other hand, a scale-in auto-tuner, to reduce cost by benefiting from the FaaS
sub-second billing model (often per $100$ms). Our results certify that \mlless~can be $15$X faster than 
 serverful ML systems~\cite{pytorch} at a lower cost for sparse ML models that exhibit fast convergence such as sparse logistic regression and matrix
factorization. Furthermore, our results show that \mlless~can easily scale out to increasingly large fleets~of serverless workers.
\end{abstract}

\begin{keyword}
Serverless computing \sep Function-as-a-Service \sep Machine Learning
\end{keyword}

\end{frontmatter}

\section{Introduction}

 A vivid interest has recently arisen over the issue of serverless computing and
its implications for general-purpose computations. Originally geared towards web microservices
and IoT applications, recently researchers have started to examine its potential in data-intensive applications~\cite{openlambda, pywren, gg, numpywren, crucial, pywrenibm, multicloud, lithops}.
Altogether, these works have led to a clear identification of  ``what'' workloads are best suited to serverless computing. 

Similarly, a recent trend on building machine learning (ML) on top of Function-as-a-Service (FaaS)
platforms~has emerged as a new research area~\cite{siren, cirrus, lambdaml, stratum, vatche18, silva}. 
Since  ML inference is a trivial use case of FaaS computing~\cite{stratum, vatche18}, attention has turned into 
ML model training, which is a deal more difficult. Despite all the preceding efforts, it still remains uncertain
under what conditions ML tranining on top of  FaaS may be beneficial. This is not a trivial question, as the 
evaluation of serverless ML training is not as simple as running VM-based ML systems such as PyTorch or TensorFlow
on top of~cloud functions. The fundamental reason is that traditional ML systems  have not been prepared to deal with
the idiosyncrasies of the FaaS model such as the impossibility of function-to-function
communication, the limited memory and transient nature of  serverless functions \cite{berkeley, cirrus}.

Our aim in this work is to understand the feasibility of supporting distributed ML training 
over FaaS platforms. Concretely, we are interested in the following question:
\begin{itquote}
When can a FaaS platform be more cost-efficient than
a VM-based, ``serverful'' substrate  (IaaS) for distributed ML training?
\end{itquote}

To help in this endeavor, we introduce \mlless, a prototype FaaS-based ML training system atop
IBM Cloud Functions. To pick a point in the design space that is more cost-efficient than
the prior serverless ML systems~\cite{siren, cirrus, lambdaml}, \textsc{MLLess} comes up with two novel optimizations tailored 
to the traits of the FaaS model. Our view is that in the same way that serverful ML training has been specialized for coarse-grained VM-based
clusters, a fair comparison between FaaS and IaaS is not possible unless model training is specialized to address the limitations of the FaaS computing model.
Our two novel optimizations pursue this noble goal. The first optimization reduces the bandwidth requirements of exchanging model updates between
workers using shared external storage, yet assuring convergence. The rationale behind this optimization is the ``stateless'' essence of FaaS, which does not allow
concurrent functions to directly share state. Thus, any model update (e.g., a gradient) must be exchanged through remote storage.

The second specialization is a scale-in auto-tuner to 
increasingly decrease the number of workers, so as the cost of training, with no side effects on 
convergence. This method benefits from the ``pay-per-usage'' cost 
model of FaaS to save money, instead of the reservation-based model that charges end users for idle VM resources. 
From an ML perspective, FaaS thus promises more savings, since only the active workers at any given time will be billed,
bringing out a superior cost-efficiency than IaaS if the pool of workers is optimally shrunk during model training.

Equipped with this specialized training architecture,  we next use \mlless~to investigate the cost-efficiency of FaaS for
ML training. Since the per-minute cost of executing a cloud function is higher than its resource-equivalent VM instance (see Table~\ref{tab:pricing}), 
we focus on \emph{fast-convergent models} here, which intuitively are the most amenable to serverless computing. 
ML models that take hours to converge are presumably more cost-optimal to be trained on VM instances with
today's offerings. We also examine the scalability of \mlless, and compare the effectiveness of our significance filter to
loose synchronization models such as the Stale Synchronous Parallel (SSP)~\cite{ssp}. This is comparison is very
interesting, since while SSP restricts how stale, or ``old'', a model parameter can be, our significant filter bounds how
inaccurate a parameter can be. Hence, shedding light on what type of synchronization strategy is more appropriate  
for the indirect communication model of serverless computing is of vital importance.

\medskip
\noindent\textbf{Main insights.} Our study yields three key insights:
\begin{enumerate}
\item \emph{FaaS can be more cost-efficient than ``serverful'' libraries} such as PyTorch~\cite{pytorch} for models that quickly converge. Although
indirect communication severely penalizes FaaS-based ML training, \mlless~ameliorates its impact
with the aid of its two main optimizations, being $15$X faster while $6.3$X cheaper than PyTorch. It must be noted that for dense models, though, the benefits
of \mlless~are narrower, which suggests that where FaaS-based ML training excels is in \emph{sparse} models.
\item \emph{Specializing distributed ML training to FaaS is crucial~to yield a higher cost-efficiency.}
This requires dealing with low-level issues such as high gradient sparsity, filtering out non-significant updates, or
dynamically scaling down the pool of workers, i.e., abilities that are not always available in VM-based ML systems such as PyTorch. 
\item \emph{Filtering non-significant updates is better than bounded staleness for FaaS-based ML training.} While SSP has proven to
be very effective for distributed ``serverful'' ML training, it renders only a marginal benefit for model training over FaaS. Since the communication
of updates is the major bottleneck in FaaS, the flexibility to delay the propagation of an update until it
eventually becomes significant is more effective than tolerating some amount of staleness. 
\end{enumerate}
\medskip

\noindent \textbf{Reproducibility and open source artifacts.} \mlless~is publicly available at \url{https://github.com/pablogs98/MLLess}.

\medskip
\noindent \textbf{Roadmap.} A preliminary version of this work has appeared at Middleware'21~\cite{mlless21}. The rest of the article is structured as follows: \S\ref{sec:2}
discusses the challenges of FaaS for ML training. \S\ref{sec:mlless} presents \mlless' design. \S\ref{sec:opt}
details \mlless' major optimizations. \S\ref{sec:impl} gives implementations details, and \S\ref{sec:eval}
presents experimental results. \S\ref{sec:rw}surveys related work, and \S\ref{sec:end} concludes.

\section{Is FaaS Appropriate for ML Training?}
\label{sec:2}

Although the main innovation of serverless is hiding servers, what makes
serverless computing so powerful for training models is: 
\begin{itemize}
\item A ``pay-as-you-go'' model that does not charge users~for idle resources; and
\item Rapid and unlimited scaling up and down of resources to zero if necessary.
\end{itemize}
By playing out with the two essential qualities to a greater~or lesser extent, 
state-of-the-art FaaS-based ML systems~\cite{siren, cirrus, lambdaml} have
inadvertently established a rich design space in their attempt to circumvent 
the stringent limitations of the FaaS model. In our case, we leverage these two 
properties to our favor to design our scale-in auto-tuner (\S\ref{sec:sched}).
Moreover, we take the other tack and optimize indirect communication, as
it is the primary training bottleneck  (\S\ref{sec:sf}). Altogether, these
two forces, namely auto-scalability and general performance optimization,  
have enabled us to show that FaaS can be more cost-efficient than ``serverful'' computing (IaaS).

\medskip
\noindent \textbf{Limitations.} Despite the good news, it is important to remind  ourselves about
the most prominent hurdles to serverless ML training. First, 
today's FaaS platforms only support stateless function calls 
with limited resources and duration. For instance, a function call in 
IBM Cloud Functions can use up to $2$GB of RAM and must finish 
within $10$ minutes\footnote{\url{https://cloud.ibm.com/docs/openwhisk?topic=openwhisk-limits}}. 
Such limits automatically discard some natural practices such as
loading all training data into local memory, which must be downloaded remotely from shared storage in mini-batches, 
while inhibiting the use of any ML library that
has not been designed with these constraints in mind. For instance, the authors of Cirrus~\cite{cirrus}, a serverless ML
system, found impossible to run Tensorflow~\cite{tensorflow} or Spark~\cite{spark} on AWS lambdas with
such resource-constrained setups.

However, the most
critical issue is the impossibility of direct communication, which requires
a trip through shared external storage to pass state between functions. 
This not only contributes significant extra latency, often hundreds of milliseconds, but also
prevents exploiting HPC communication topologies adopted in ML 
such as tree-structured and ring-structured \emph{all-reduce}~\cite{ring}. For this reason,
a careful optimization of communication, ranging from serialization of high sparsity models 
to the development of communication-reduction techniques, such as our significance filter,  is crucial. 

\section{\mlless}
\label{sec:mlless}

We implement \mlless, a prototype FaaS-based ML training system 
built on top of IBM Cloud Functions. In this section, we describe its main
components and defer the explanation of our two key optimizations
to \S\ref{sec:opt}.

\begin{figure}
\centering
\includegraphics[width=0.75\textwidth]{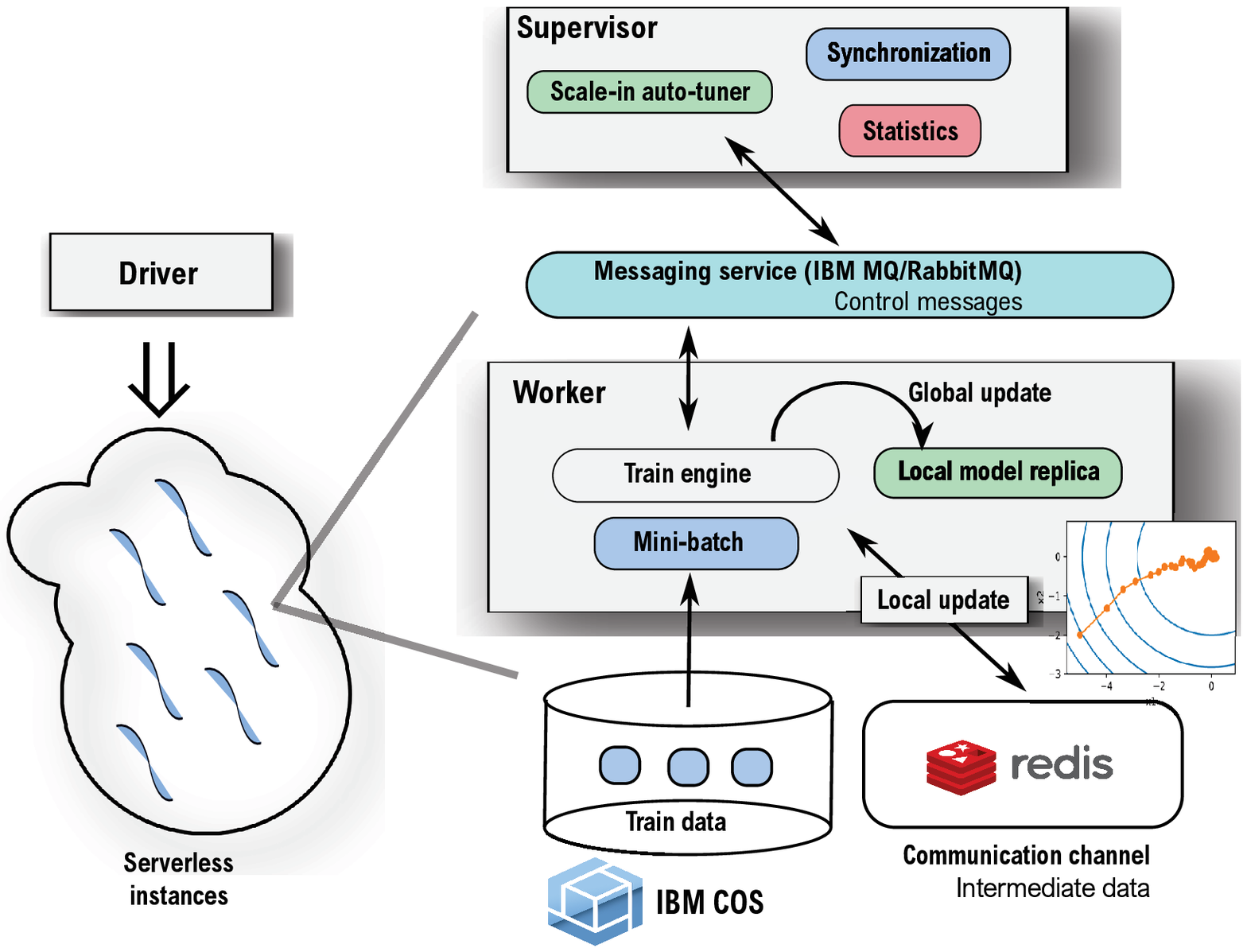}
\caption{\textbf{\mlless~system architecture.}}
\label{fig:1}
\vspace{-10pt}
\end{figure}

\subsection{System Overview}

An architectural overview of \mlless~is illustrated in Fig.~\ref{fig:1}. 
\mlless\hphantom{} consists of a \emph{driver} that runs on the local machine of the  
data scientist. When the user launches a ML training job, the driver 
invokes the requested number of serverless \emph{workers}, who execute the 
job in a data-parallel manner. Each worker maintains a \emph{local replica of the model}
and uses the library of \mlless~to train it. We have chosen this decentralized design for \mlless~since it better abides by to a pure
FaaS architecture compared to the \emph{VM-based parameter server} \cite{parameter} model, e.g., followed by 
other works such as Cirrus~\cite{cirrus}.
\medskip

\noindent \textbf{Supervisor.} Since the driver is typically far from the data center (e.g., at a university lab), tasks, such as aggregating statistics to find whether the convergence criterion has been reached,
can introduce significant delays. To minimize latency, the driver also starts up a serverless function which acts as a supervisor. The
role of the supervisor is to collect and aggregate statistics, synchronize worker progress, e.g., in order to bound the divergence between model copies,
and terminate the training job when the stopping criteria is fulfilled, among other tasks. Nevertheless, one of the 
core attributions of the supervisor is to automatically remove workers when their marginal contribution to convergence is minor, or even negative
due to increased communication costs (please, see \S\ref{sec:sched} for details). 

Since the supervisor is a serverless function, it is subjected to the time constraints of the underlying FaaS platform, in this case, to a
maximum execution time of $600$ seconds (IBM Cloud Functions). Although the supervisor never ran out of time in our experiments,
it would not be laborious for the supervisor to pause execution when the $10$-minute timeout is close, checkpoint its internal state to storage and
re-launch it as a new worker.
\medskip

\noindent \textbf{Synchronization.}  The iterative nature of ML algorithms 
may imply certain dependencies across successive iterations. To keep consistency, synchronizations 
between workers must happen at certain boundary points. To this aim, \mlless\hphantom{} supports
different consistency models: the Bulk Synchronous Parallel (BSP) model where the workers must wait for each other at the end of every
iteration, and the Stale Synchronous Parallel (SSP)~\cite{ssp}, a synchronization model that relaxes consistency by 
permitting workers to read stale parameter values as long as they are not too ``stale''.
SSP was proposed to overcome the \emph{straggler problem} suffered by BSP, where each iteration proceeds at the pace of the slowest worker.
For this reason, SSP defines an explicit ``slack'' parameter for coordinating progress among the workers. The slack
specifies how many iterations out-of-date the local replica of a worker can be, which implicitly dictates how far ahead of the slowest
worker any worker is allowed to progress. For instance, with a slack of $s$, a worker at iteration $t$ is guaranteed to see
all updates from iterations $1$ to $t - s - 1$, and it may see (not guaranteed) the updates from iterations $t - s$ to $t - 1$.
We set BSP as the default synchronization model because  it simplifies the reasoning about the impact of our optimizations
on model convergence.

\mlless\hphantom{} also includes a variant of BSP where the workers only send those updates that are significant. This
variant reduces communication costs, but allows local model copies to diverge across workers (see \S\ref{sec:sf}). Compared~with
SSP,  our variant restricts how inaccurate the aggregated update for a model parameter can be, in comparison to its current value,
instead of bounding stateleness  in terms of the iteration count.
\medskip

\noindent \textbf{Communication channels.}  Due to the absence of direct communication
between the workers or with the supervisor, \mlless~establishes two channels of \emph{indirect communication:} 
\begin{itemize}
\item \textbf{Signaling channel.} For exchanging  control messages between the workers and the supervisor (e.g., to signal a worker to advance to the next iteration), it
leverages a messaging service built on RabbitMQ\footnote{\url{https://www.rabbitmq.com}}, 
though it could be replaced by the native IBM's MQ messaging service\footnote{\url{https://www.ibm.com/products/mq}} 
without complications.
\item \textbf{Intermediate state.} For sharing the intermediate state generated
during model training (e.g., local gradients), \mlless~employs Redis\footnote{At the time of writing this paper, there is no serverless cache service in IBM Cloud, so users
still need to provision cache instances themselves.}, a~low-latency, in-memory key-value store that supports thousands of requests/s
\cite{locus}.
\end{itemize}

~To store the input dataset mini-batches, \mlless~uses IBM COS, the serverless object storage service from IBM Cloud. 
Since it is  an ``always on'' service, it does not incur any startup delay, though it is a little bit slower compared to Redis,
but sufficient for our purposes. 

\subsection{Model Training}

\mlless~assumes that a training dataset $D$ consists of $N$  independent and identically distributed (IID) data samples drawn 
by the underlying data distribution $\mathcal D$.
Let $D = \left\{(\mathbf{v}_i \in \mathbb{R}^n, l_i \in \mathbb{R}) \right\}^N_{i=1}$,
where $\mathbf{v}_i$ denotes the feature vector and $l_i$ represents the label of the 
$i^\textrm{th}$ data sample. The objective of training is to find an ML model $\mathbf{x}$ 
that minimizes  a loss function $f$ over the dataset $D$: $\argmin_ \mathbf{x} \frac{1}{N} \sum_i  f(\mathbf{v}_i, l_i, \mathbf{x})$.

In today's systems, one typical optimizer is Stochastic Gradient Descent (SGD)~\cite{sgd}, 
 an iterative training algorithm that adjusts $\mathbf{x}$ based on a few samples at a time:
\[
\mathbf{x}_t=\mathbf{x}_{t-1} - \eta_t\nabla f_{\mathcal B_t}(\mathbf{x}_t),
\]

where $\eta_t$ is the learning rate at step $t$ of the algorithm, $\mathcal B_t$ is a mini-batch of $B$ training samples, and $\nabla f_{\mathcal B_t}(\mathbf{x}_t)$ is the gradient of
the loss function, averaged over the batch samples: $\nabla f_{\mathcal B_t}(\mathbf{x}_t) = \frac{1}{B} \sum_{(\textbf{v}, l) \in \mathcal B_t} \nabla f\left(\mathbf{v}, l, \mathbf{x}_t\right)$.
\mlless~supports different optimizers (see Table~\ref{tab:eval} for further details).

Since serverless workers have very limited memory, e.g., IBM Cloud Functions 
can only access at most $2$~GB of local RAM, it is infeasible to replicate all
training data into memory. Hence, \mlless~assumes that 
the training dataset is stored in an object store, i.e., IBM COS, and partitioned into
mini-batches of size $B$. To generate the mini-batches in the appropriate format (e.g., 
feature normalization), 
\mlless\hphantom{} leverages PyWren-IBM~\cite{pywrenibm}, a FaaS-based map-reduce framework. 
For instance, by chaining two map-reduce jobs, it
is straightforward to normalize a dataset using \texttt{min-max} scaling, where the first map-reduce job 
gets the minimum and maximum values of each feature, and the second one does
the actual scaling.
\medskip

\noindent \textbf{Job execution.} A typical execution of a training job with \mlless~involves the following three steps: \ding{182} Once up and running, each worker creates
a local copy of the model with the aid of the \mlless~library, and starts to optimize
 the loss function $f$; \ding{183} In each iteration, eachworker separately fetches
a mini-batch from IBM COS, and then it calculates a local update from its model replica before synchronization takes place.
The type of local update depends on the ML algorithm. In the case of the Stochastic Gradient Descent (SGD)~\cite{sgd} 
algorithm, local gradients are averaged to obtain a global gradient update; \ding{184} Due to the lack of direct communication,
each worker independently of the others pulls all the local updates from external storage (Redis), and aggregates them
to update its local model copy. The availability of a local update is announced to the rest of workers through the
signaling channel. 

It is worth to note here that this decentralized design is easy to scale out, since no single component is
responsible for merging all the local updates,  as it occurs in LambdaML~\cite{lambdaml}. Indeed, to
scale to large input data sizes or number of workers, it suffices to add more Redis instances and shard the local, ephemeral updates
from the workers across~them, so that the load is evenly distributed among all Redis shards. Further, the separation of control and
data flows makes it easy to support different consistency models, from strict models such as BSP to relaxed ones such as SSP, with
little or no changes in the iterative optimizers. 
\medskip 

\noindent \textbf{Weak scaling.} \mlless~parallelism strategy keeps the mini-batch
size $B$ the same when the number of worker reduces by the action of our scale-in auto-tuner (see \S\ref{sec:sched}).
The reason is to avoid that every change in the number of workers incurs costly data repartitioning transfers to adjust
the mini-batch size, since it may lower the net benefit of worker downscaling. Nonetheless, this entails that the
global batch size $B_\textrm{g}$ decreases linearly with the number of workers $P$. That is, $B_\textrm{g} = PB$, which may affect
the convergence speed of the optimizer~\cite{weak}. To prevent significant deviation, the auto-tuner only removes  a worker if the degradation in loss reduction 
does not exceed a certain threshold (see \S\ref{sec:sched} for details).

\section{Optimizations}
\label{sec:opt}

FaaS is typically more expensive in terms of \$ per CPU cycle than ``serverful''
computing. This means that a priori, a user optimizing for cost would
likely prefer IaaS over FaaS. Fortunately, FaaS-based training runtimes
still show a large margin of improvement that can lead to more cost-effective 
training, particularly, for models that converge fast. Here~we describe two
optimizations to confirm this intuition. In \S\ref{sec:sf}, we elaborate on
an optimization to improve throughput, and discuss the scale-in autotuner
details in \S\ref{sec:sched}.

\subsection{Significance Filter}
\label{sec:sf}

As cloud providers disallow direct communications between 
functions, fast aggregation of gradients cannot be made with
optimal primitives such as ring \texttt{all-reduce}~\cite{ring}, and
must be done through external storage.  Despite \mlless\vphantom{} uses a
low-latency key-value store such as Redis for this purpose, 
the exchange of updates is, as expected, a high-cost operation, which
can significantly diminish the benefits of parallelism. This is
particularly visible for the Bulk Synchronous Parallel (BSP) model of computation, 
where no worker can proceed to the next step without having all
workers finish the current step.

To reduce strain on external storage, \mlless~comes along with a variant
of the Approximate Synchronous Parallel (ASP) model~\cite{gaia},
we name it \emph{`Insignificance-bounded Synchronous Parallel' (ISP)}
to distinguish it from the original consistency model. In short, ASP was originally proposed to break the communication bottleneck
over WANs in~geo-distributed ML systems. The central idea of ASP was
to remove insignificant communication across data centers, yet
ensuring the correctness of ML algorithms. 
In this sense, ISP borrows from ASP the idea of filtering non-significant
updates, but applies it to accelerate the broadcast of local gradients 
between workers \emph{within the same data center, or cluster}. Since ISP operates at the cluster level, its implementation is much
simpler than ASP. It does not need complex synchronization mechanisms between data
centers such as the ASP selective barrier and mirror~\cite{gaia}, which facilitates its adoption in current serverless architectures. Further,
ASP was originally implemented using the parameter server model~\cite{parameter}, and not for fully decentralized training systems
such as \mlless.
\medskip

\noindent \textbf{Overview.}
In a nutshell, ISP can be viewed as a technique to reduce the per-step communication complexity while preserving 
the convergence rate, which results in an improvement in system throughput, so as job training times. More
concretely, its goal is to reduce the size of the local update to be transferred to the rest of workers, 
after the local worker goes  through its mini-batch. This is possible because ISP benefits from the 
robustness of many ML algorithms (e.g., logistic regression,
matrix factorization, collapsed Gibbs, etc.), which tolerate a bounded amount 
of inconsistency. To ensure equal algorithmic progress per-step, ISP 
enables users to tune the strictness of the significance filter to achieve the sweet spot. 
Typically, the strictness is controlled by a threshold $v$, 
which is reduced over time. That is, if the initial threshold is $v$, then the 
threshold value $v_t$ at step $t$ of the ML algorithm is given by $v_t = \frac{v}{\sqrt{t}}$.

Very importantly, ISP is synchronous in nature. That is,  all  workers must finish the current step before proceeding
to the next iteration. Therefore, ISP differs from bounded asynchronous consistency models such as SSP~\cite{ssp},
which sets a fixed upper-bound on the iteration gap between the  fastest worker and the slowest one. The focus of
ISP is thus on reducing communication requirements rather than alleviating system heterogeneity as SSP realizes. Observe that
a smaller communication complexity also means a lower computation complexity, since less model parameters must
be updated per iteration.

It is worth to note here that the original definition of the ASP model~\cite{gaia} 
does not presume a specific significance function, as clearly reflected in its proof of 
convergence in Theorem 1, which only provides a general analysis of ASP.
However, to yield a more robust evidence of ISP validity, we ``incarnate'' the ISP model with a concrete
significance filter, and prove its convergence exclusively for this function.
\medskip

\noindent \textbf{Significance function.} To trim communication while
bounding deviation between any two model replicas, a ``clever'' compression technique
is to have each worker aggregate its local updates while they are non-significant. In this way,
if the accumulated update eventually becomes significant, the worker will be able to broadcast the complete history 
of its non-significant updates encoded as a single update,~thereby minimizing both the communication burden and
deviation from the ``true'' mini-batch gradient. 

More formally, let $\mathbf{x}_t \in \mathbb{R}^n$ be the
parameters of the model at step $t$, and $\mathbf{u}_t$ be the associated update s.t. 
$\mathbf{x}_t=\mathbf{x}_{t-1} + \mathbf{u}_t$. As  the  update  operation  is  associative  and  commutative,  we
simply aggregate the non-significant updates for any model parameter 
by summing  them up. Eventually, the \emph{per-parameter}
accumulated update may become significant and be pushed to the rest of workers.
Let $t_{p_i}$ be the last propagation time for the $i^\mathrm{th}$ parameter.
Then, we define the per-parameter significance filter as:
$\left| \frac{\sum^t_{t^\prime = t_{p_i}} u_{i, t^\prime}}{x_{i, t}}  \right| > v_t.$

\noindent Note that with the above significance filter, the compression
factor becomes proportional to the number of accumulated updates, i.e., $m_t := \left(t-t_{p_i}\right)$.
This number can be arbitrarily big, provided that the magnitude of the accumulated update relative to the 
current model parameter value is less than $v_t$. For this reason, it is key to show that ISP is able to
 maintain an approximately-correct copy of the global model in each worker. We formulate this
in Theorem~\ref{thm:1}:

\begin{thm}
\label{thm:1}
Suppose we want to find the minimizer $\mathbf{x}^*$ of a convex function $f(\mathbf{x}) =\sum^T_{t=1} f_t(\mathbf{x})$ (components $f_t$ are also convex) via
SGD on one component $\nabla f_t$ at a time. Also, the algorithm is replicated across $P$ workers with synchronization at every step $t$. Let
$\mathbf{u}_t := - \eta_t \nabla f_t(\widetilde{\mathbf{x}}_t)$, where the step size $\eta_t$ decreases as $\eta_t=\frac{\eta}{\sqrt{t}}$. As per-parameter significance
filter, we use $\left| \frac{\delta_{i, t}}{\widetilde{x}_{i, t}}  \right| > v_t$,  where $\widetilde{x}_{i, t}$ is the $i^\mathrm{th}$ parameter of the noisy
state $\widetilde{\mathbf{x}}_t := (\widetilde{x}_{0, t}, \widetilde{x}_{1, t}, \dots, \widetilde{x}_{n, t})$ at step $t$, $\delta_{i, t} := \sum^t_{t^\prime = t_{p_i}} u_{i, t^\prime}$ denotes the
accumulated update for the $i^\mathrm{th}$ parameter since the last propagation time $t_{p_i}$, and $v_t$ is the significance threshold that decreases as $v_t=\frac{v}{\sqrt{t}}$.
Then, under suitable conditions: $f_t$ are $L$-Lipschitz and the distance between any $\mathbf{x}$, $\mathbf{x}^\prime$ in the parameter space
$D(\mathbf{x}, \mathbf{x^\prime}) \leq \Delta^2$ for some constant $\Delta$:
\[
R[X] := \sum^T_{t=1} f_t(\widetilde{\mathbf{x}}_t) - f_t(\mathbf{x}^*) = \mathcal{O}\left(\sqrt{T}\right), 
\]
and thus $\lim_{T \to \infty}\frac{R[X]}{T} = 0$.
\end{thm}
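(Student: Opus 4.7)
The plan is to adapt Zinkevich's online gradient descent regret analysis to the ``noisy iterate'' setting, treating the discrepancy $\boldsymbol{\delta}_t$ between the local model $\widetilde{\mathbf{x}}_t$ actually used by a worker and the fully-synchronized state $\mathbf{y}_t := \widetilde{\mathbf{x}}_t + \boldsymbol{\delta}_t$ (the state that would result if every accumulator were propagated immediately) as a controlled perturbation. The regret will then decompose into a standard Zinkevich-style piece plus a filter-induced error, and I will show both contribute only $\mathcal{O}(\sqrt{T})$.

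First, convexity yields $f_t(\widetilde{\mathbf{x}}_t) - f_t(\mathbf{x}^*) \leq \langle \nabla f_t(\widetilde{\mathbf{x}}_t),\, \widetilde{\mathbf{x}}_t - \mathbf{x}^*\rangle$. I split $\widetilde{\mathbf{x}}_t - \mathbf{x}^* = (\mathbf{y}_t - \mathbf{x}^*) - \boldsymbol{\delta}_t$ and handle the two resulting terms separately. For the first, since $\mathbf{y}_{t+1} = \mathbf{y}_t - \eta_t \nabla f_t(\widetilde{\mathbf{x}}_t)$, expanding $\|\mathbf{y}_{t+1} - \mathbf{x}^*\|^2$ and rearranging gives the classical bound
\[
\langle \nabla f_t(\widetilde{\mathbf{x}}_t),\, \mathbf{y}_t - \mathbf{x}^*\rangle \leq \frac{1}{2\eta_t}\bigl(\|\mathbf{y}_t - \mathbf{x}^*\|^2 - \|\mathbf{y}_{t+1} - \mathbf{x}^*\|^2\bigr) + \frac{\eta_t}{2}\|\nabla f_t(\widetilde{\mathbf{x}}_t)\|^2.
\]
Summing over $t$ and applying Zinkevich's varying-step-size trick (regrouping the $1/(2\eta_t) - 1/(2\eta_{t-1})$ factors, absorbing them via $D \leq \Delta^2$, and combining with $L$-Lipschitzness) with $\eta_t = \eta/\sqrt{t}$ yields $\sum_t \langle \nabla f_t(\widetilde{\mathbf{x}}_t), \mathbf{y}_t - \mathbf{x}^*\rangle = \mathcal{O}(\sqrt{T})$ via $\sum_{t=1}^T 1/\sqrt{t} = \mathcal{O}(\sqrt{T})$.

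For the perturbation term, Cauchy--Schwarz gives $|\langle \nabla f_t(\widetilde{\mathbf{x}}_t), \boldsymbol{\delta}_t\rangle| \leq L\,\|\boldsymbol{\delta}_t\|$. The per-parameter filter guarantees $|\delta_{i,t}| \leq v_t\,|\widetilde{x}_{i,t}|$ coordinatewise (the accumulator is propagated and reset the moment this would be violated), so $\|\boldsymbol{\delta}_t\|^2 \leq v_t^2\,\|\widetilde{\mathbf{x}}_t\|^2 \leq v_t^2 \Delta^2$. With $v_t = v/\sqrt{t}$, the perturbation contribution totals $L\Delta v \sum_{t=1}^T 1/\sqrt{t} = \mathcal{O}(\sqrt{T})$. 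Adding this to the first piece gives $R[X] = \mathcal{O}(\sqrt{T})$, whence $R[X]/T \to 0$.

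The main obstacle is the multiplicative-to-additive transition in the perturbation bound: the filter is stated as a relative condition $|\delta_{i,t}|/|\widetilde{x}_{i,t}| \leq v_t$, not as an absolute threshold, so the diameter hypothesis must be leveraged to uniformly bound $\|\widetilde{\mathbf{x}}_t\| \leq \Delta$ (possibly via an implicit projection onto the feasible domain). A secondary subtlety is the multi-worker aspect: since each of the $P$ workers carries its own accumulator, one must verify that the gap between the virtual synchronized trajectory $\mathbf{y}_t$ used in the telescoping argument and each worker's $\widetilde{\mathbf{x}}_t$ is dominated by the same per-worker bound, so that the analysis above remains valid once constants are rescaled by $P$. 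Both issues are routine once framed correctly, but they are precisely where the filter's non-standard shape meets the Zinkevich template.
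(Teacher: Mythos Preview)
Your proposal is correct and follows the same route as the paper: telescope on the fully-synchronized reference sequence $\mathbf{y}_t$ (the paper's $\mathbf{x}_t$, which likewise satisfies $\mathbf{y}_{t+1}=\mathbf{y}_t-\eta_t\nabla f_t(\widetilde{\mathbf{x}}_t)$) to produce the standard Zinkevich terms, then bound the residual $\langle\nabla f_t(\widetilde{\mathbf{x}}_t),\boldsymbol{\delta}_t\rangle$ via the filter condition together with the diameter hypothesis, absorbing the $P-1$ per-worker accumulators exactly as you anticipate. Your coordinatewise $\ell_2$ estimate $\|\boldsymbol{\delta}_t\|_2\le v_t\,\|\widetilde{\mathbf{x}}_t\|_2$ is in fact tighter and more direct than the paper's detour through $\ell_1$-norms and a Hadamard-division lemma, which costs an extraneous $n^{3/2}$ factor in the final constant.
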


We provide the details of the proof of Theorem~\ref{thm:1} and the notations in~\ref{sec:analysis}.

\subsection{Scale-in Scheduler}
\label{sec:sched}

Compared with cluster computing, one major advantage of the FaaS 
model is that it enables the rapid adjustment~of the number of workers over time.
For instance, the removal of a worker in the middle of a training job does 
not leave cluster resources unallocated, or demand a prompt re-allocation of them to
other concurrent jobs such as in the case of reserved VMs~\cite{slaq, optimus}. This ability opens the door to the
invention of novel schedulers that, for example, minimize monetary cost by  dynamically
adjusting the number of workers as the job progresses. This may result in a more~cost-effective training, compared to traditional ``serverful'' cloud computing, which charge
customers based on the time that the reserved VMs remain active.

\begin{figure}
\centering
\includegraphics[width=0.4\textwidth]{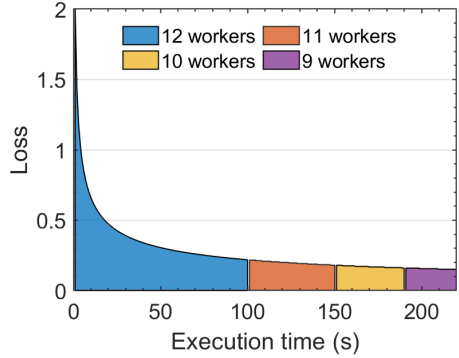}
\caption{Sample execution of the scale-in scheduler. In the low convergence zone, workers are increasingly eliminated from the pool.}
\label{fig:sched}
\vspace{-5pt}
\end{figure}

\begin{figure}
\centering
\subfloat[][\textbf{Training speed.}]{
  \includegraphics[width=0.33\textwidth]{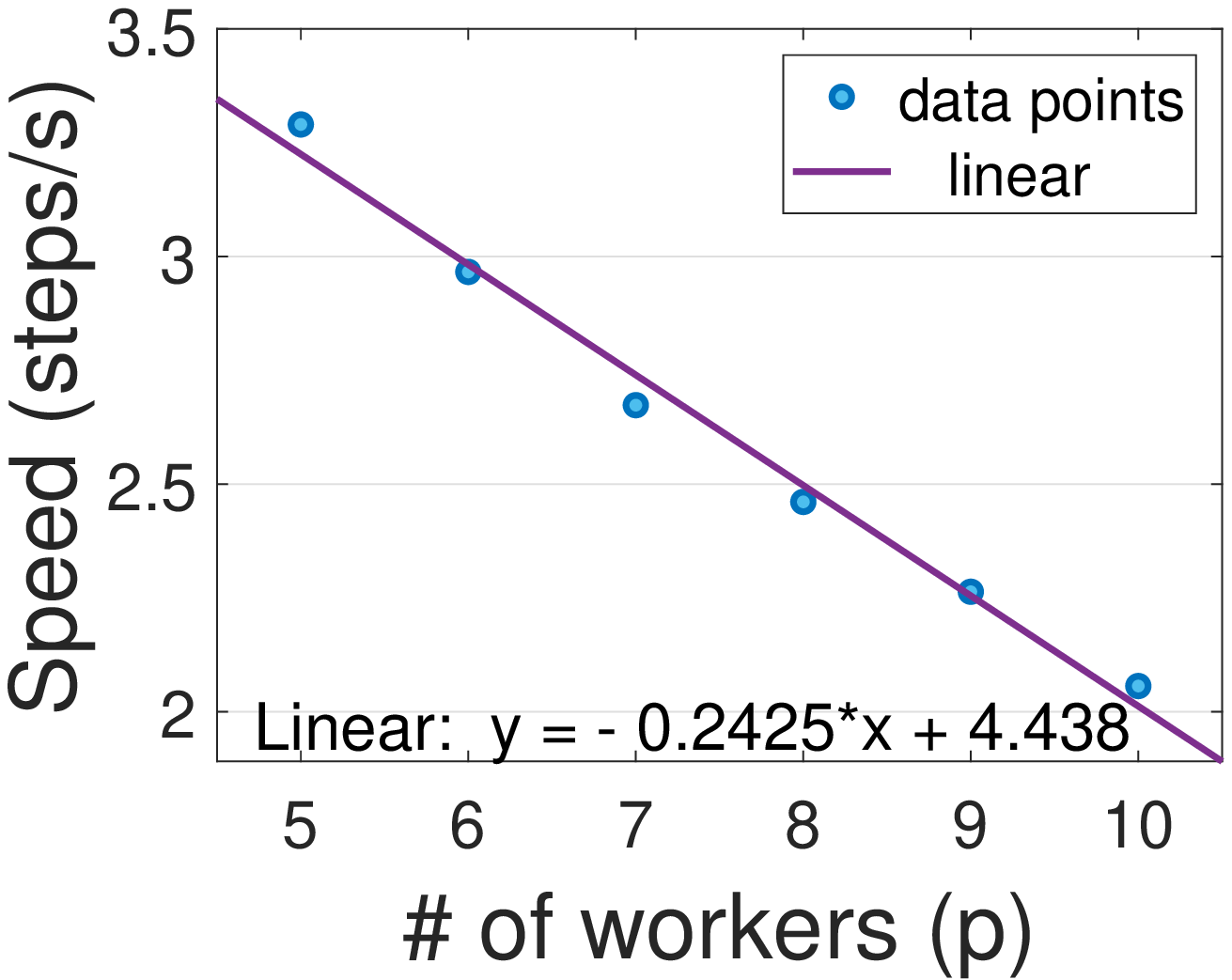}
	\label{fig:2}
}\quad
\subfloat[][\textbf{Reference curve fitting:} {\small $\theta_0 = 0.05$, $\theta_1 = 1.58$, $\theta_2 = 0.58$, $\theta_3 = 0.49$}.]{
  \includegraphics[width=0.33\textwidth]{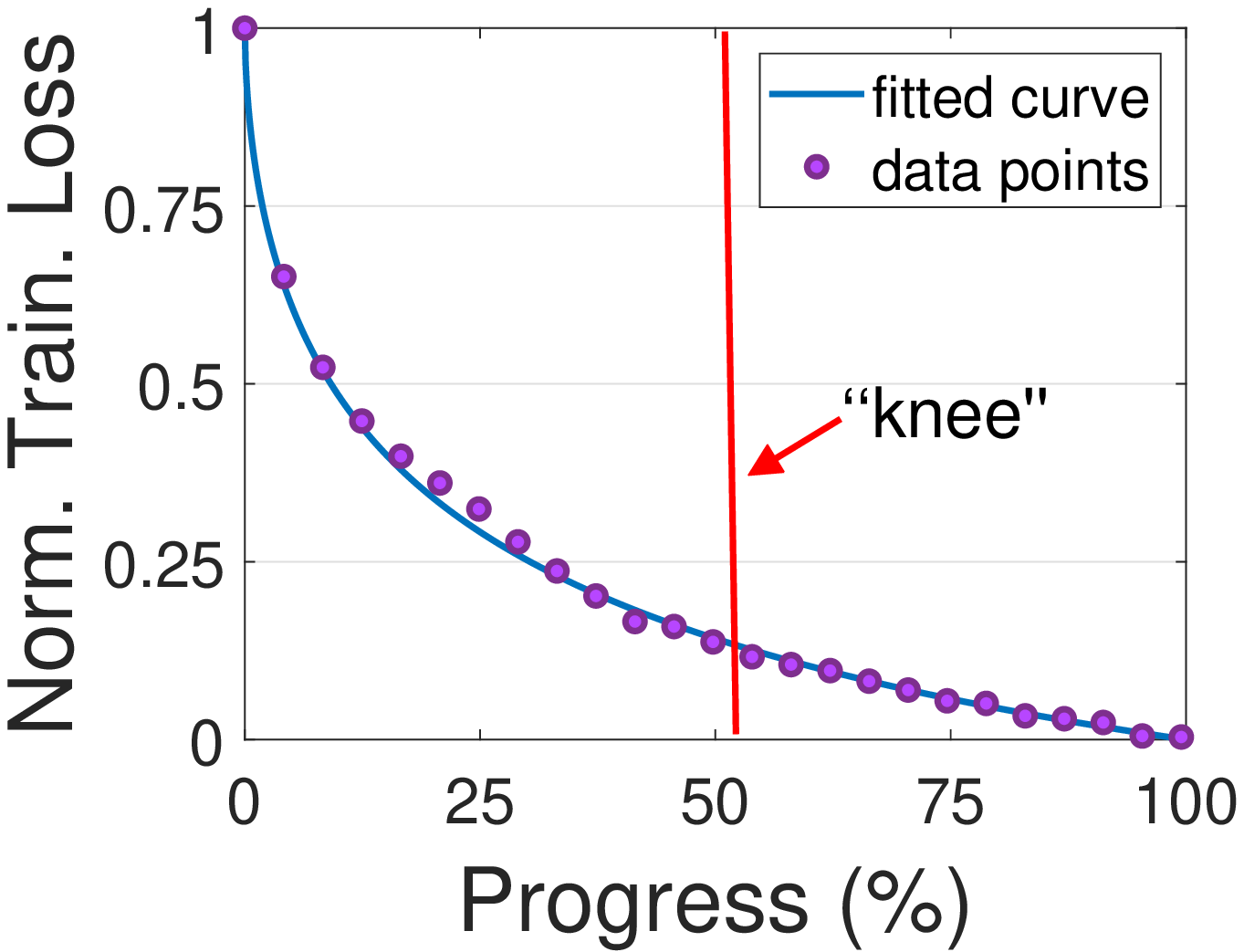}
	\label{fig:3}
}\\
\subfloat[][\textbf{Prediction error} in estimation of $50$-$200$ steps in advance.]{
  \includegraphics[width=0.33\textwidth]{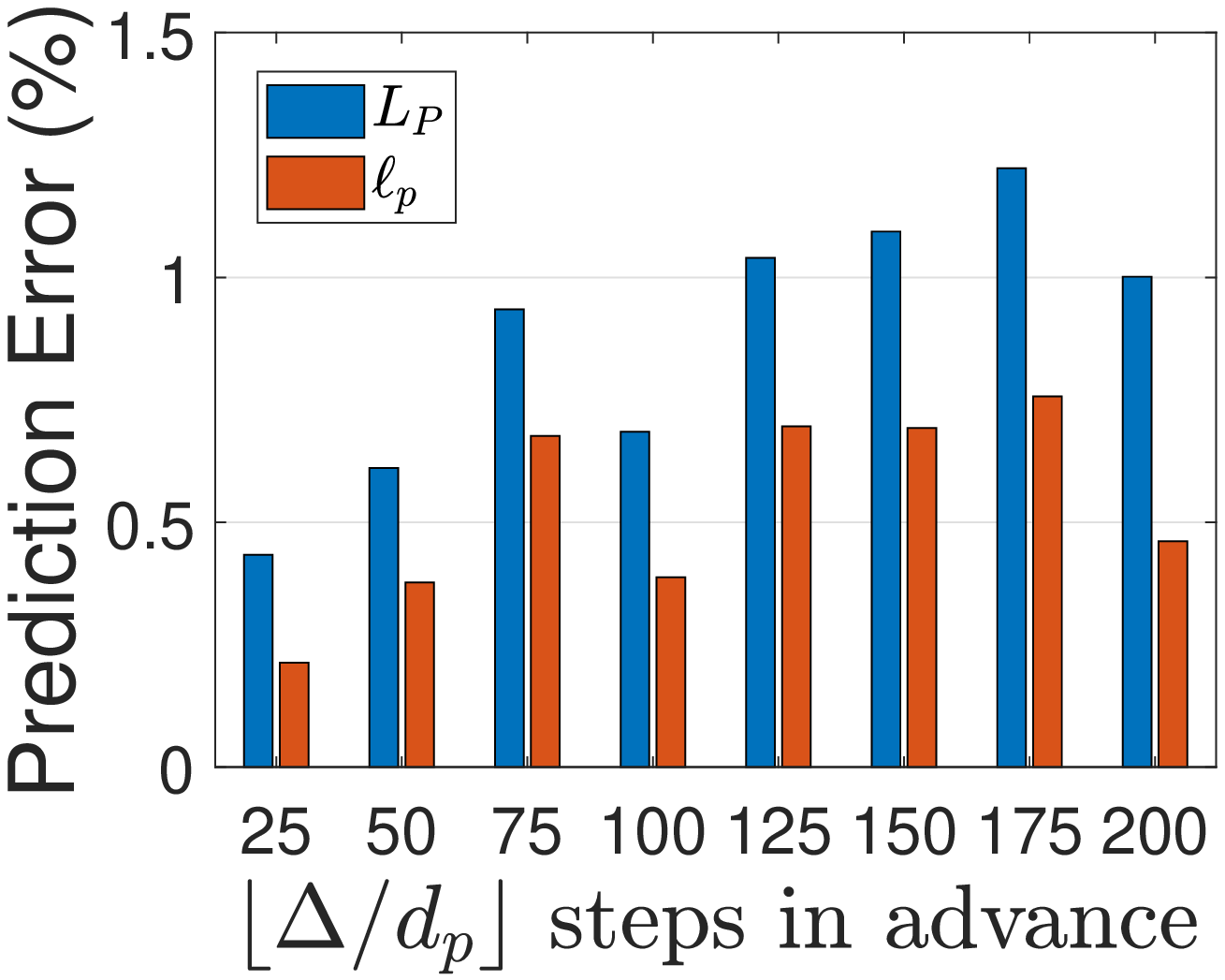}
	\label{fig:4}
}\quad
\subfloat[][\textbf{Prediction error} of the model $\ell_p(t)$ as job progresses.]{
  \includegraphics[width=0.33\textwidth]{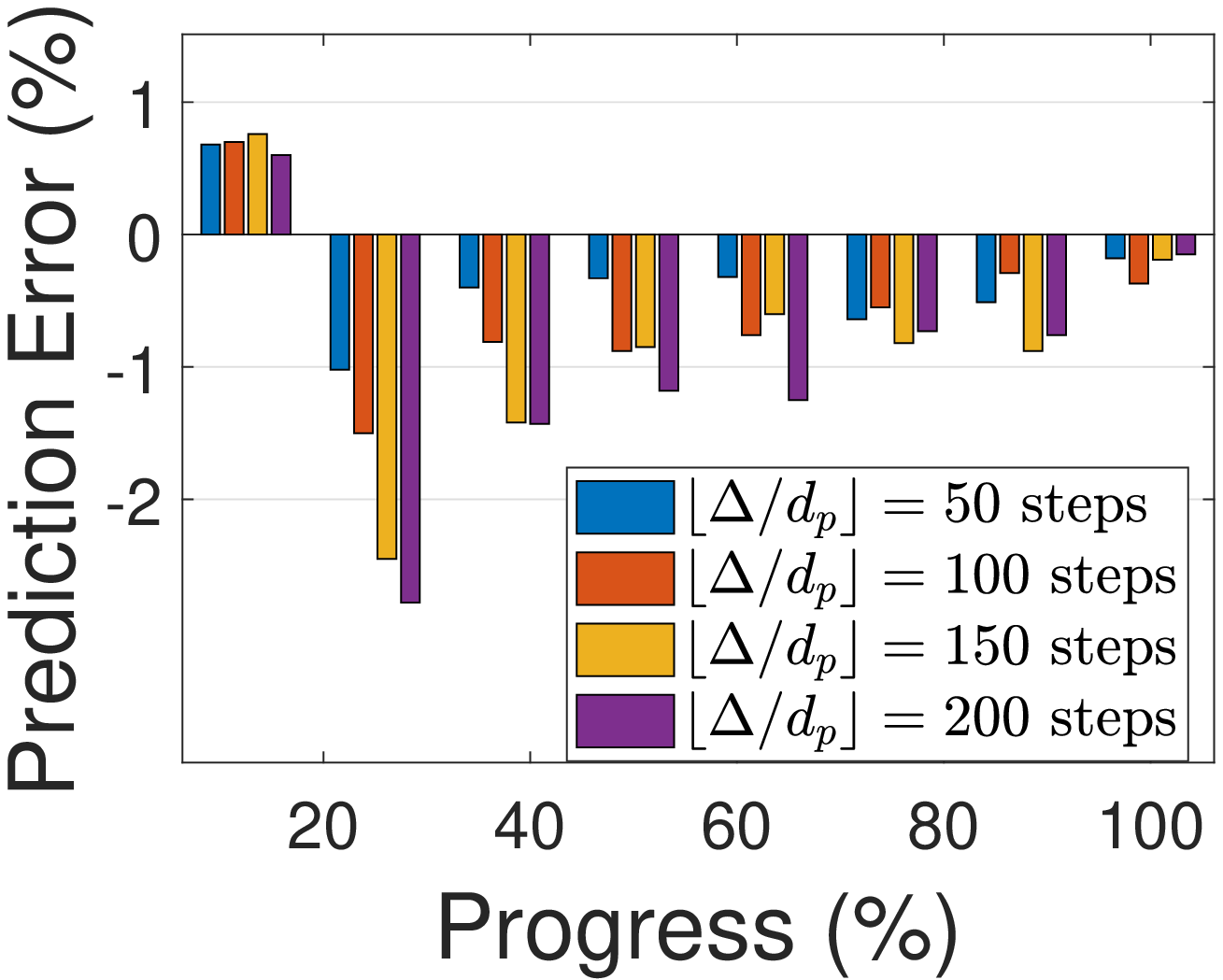}
	\label{fig:5}
}
\caption{Training speed and prediction error of training a matrix factorization (PMF) model~\cite{pmf} on MovieLens-1M data~\cite{movielens}.}
\vspace{-15pt}
\end{figure}

To show that a better cost-efficiency ratio is possible with FaaS computing,
\mlless~includes a dynamic and fine-grained scheduler designed to
remove ``unneeded'' workers. ML training is typically an iterative process where
the level of quality improvement decreases as the number of training steps increases~\cite{slaq}.
For instance, SGD reduces loss  approximately as a geometric series on convex problems~\cite{convex}. 
This implies that, while a higher number of workers is desirable during the first training steps to steeply diminish
loss, a large worker pool gives only marginal returns when loss reduction slows down, which 
ends ups worsening the cost-efficiency ratio. In this sense, the primary objective of \mlless's scale-in scheduler
is to increasingly cut down the training cost as the job progresses in order to maintain cost-effectiveness.
Fig.~\ref{fig:sched} depicts an example of how workers are increasingly removed from the system when loss
reduction stagnates as a result of the action of the scale-in scheduler.
\medskip

\noindent \textbf{Algorithm.}
 From an initial number of workers $P$, the scale-in scheduler dynamically reduces 
the worker pool based on the feedback of the ML algorithm, which includes not~only the loss values but
also the speed of the training steps. Using the loss information, the scheduler first detects the ``knee'' in
the convergence rate, after which loss reduction slows down significantly, and uses the history
of loss values at this time to fit the reference training loss curve $L_P(t)$. This curve will be used
by the scheduler to quantify the deviation from the original convergence rate introduced by a future
removal of a worker. Further, the scheduler estimates the reference step duration $d_P$ by averaging the duration
of all training steps up to this time. 

After estimation of these quantities, the scheduler removes the worker with the lowest-quality replica of the model from the pool, and waits for~the next scheduling interval. 
Now let $1 < p \leq P - 1$ denote the current number of workers. Then, the scheduler repeats the following sequence of operations upon each scheduling interval:
\begin{enumerate}
\item \emph{Estimation phase}. It fits a new training loss curve $\ell_p(t)$. But, at this time, it uses only the loss values
collected so far since the last worker removal. The key reason is that the removal~of a worker may
affect convergence due to weak scaling~\cite{weak}, for it is required a new fitting to
capture the potential deviation from the reference curve. Also, it estimates the current step duration~$d_p$
by the same procedure as above. Computation~of this estimate is necessary as $d_p < d_P$. This~occurs
because the per-step communication overhead is ~$\mathcal{\widetilde{O}}(p)$, 
where $\mathcal{\widetilde{O}}$ hides the dependence on the model 
size. This is easy to see in Fig.~\ref{fig:2}, where 
a matrix factorization model~is trained with a varying number of workers. The figure
shows  how training speed decreases linearly with the number of workers. 
As we fix the local mini-batch size to avoid repartitioning data,
less workers implies less data to  pull from external storage per iteration, so
as the communication overhead.
 \item \emph{Decision phase}. In this phase, the scheduler decides to remove a
new worker based on the relative error in the projected~loss reduction 
in time horizon $\Delta$:
\begin{equation}
\label{eq:re}
s_\Delta(t) := \left[ \frac{L_P\left(t + \left\lfloor \frac{\Delta}{d_P} \right\rfloor\right) - \ell_p\left(t + \left\lfloor \frac{\Delta}{d_p} \right\rfloor\right)}{L_P\left(t + \left\lfloor \frac{\Delta}{d_P} \right\rfloor\right)} \right], 
\end{equation}
where:
\begin{align*}
	t &:= \textrm{~current training step}, \\
	 \left\lfloor \frac{\Delta}{d_p} \right\rfloor &:= 
	 \begin{array}{l}
		\textrm{no. of steps to be completed in $\Delta$ time} \\
		\textrm{ units with $p$ workers},
	 \end{array} \\
	 	L_P\left(t + \left\lfloor \frac{\Delta}{d_P} \right\rfloor\right) &:= \textrm{~expected loss with all $P$ workers}, \\
	\ell_p\left(t + \left\lfloor \frac{\Delta}{d_p} \right\rfloor\right) &:= \textrm{~expected loss with the $p$ workers}, \\
\end{align*}
Then, the scaling-down condition is simply whether~this term is below a certain threshold: $s_\Delta(t) < S, S \in [0, 1]$. 
Intuitively, this term tells how much the convergence rate of the ML algorithm may worsen with $p$ workers compared to the 
original $P$-worker configuration~in~the region of slow convergence.
Note that the value of~$s_\Delta(t)$ can be negative, which
means that system throughput is indeed better as a result of removing workers.~This can happen if the decrease
in the communication cost outweighs the loss of parallelism, for instance. 
\end{enumerate}

Finally, we want to signal that although the parameter $\Delta$ can take arbitrary values, it has been designed to
anticipate the behavior of the system before a new scheduling interval arrives. Presume a fixed scheduling epoch 
of duration $T$. Since a new scheduling decision can be made after time $T$, the idea is to choose $\Delta \leq T$ to
ascertain whether the removal of a worker is beneficial in a short time horizon $T$. In general terms, the value of $\Delta$
will vary depending upon the specific ML job. The reason is  that while iterations may last $10$-$100$ ms in some ML jobs,
they may take a few seconds to complete in others. Irrespective of the ML algorithm, performing scheduling on short intervals  
could be disproportionally expensive due to the scheduling overhead, which involves function fitting in our case.
\medskip

\noindent \textbf{Loss deviation.} To predict how far a declining worker pool
may deviate from the initial convergence rate, as defined in Eq.~(\ref{eq:re}), the scheduler performs online
fitting on two types of learning curves, namely, the reference curve, $L_P(t)$, and the family of curves,
$\left\{\ell_p(t)\right\}_{1 < p \leq P - 1}$, drawn as the number of workers decreases over time.
To improve  prediction accuracy, each type of curve has a different shape for the following reason.
While $L_P(t)$ is built on the loss values from the region of fast convergence, 
the curves $\left\{\ell_p(t)\right\}_{1 < p \leq P - 1}$ are much more flat, as they
correspond to the region where loss reduction~slows down and stabilizes, so assuming
an appropriate curve for each region makes prediction more fine-grained.

We observe that most ML jobs use first-order algorithms such as mini-batch SGD~\footnote{Assume the loss function $f$ is convex,  differentiable,  and $\nabla f$ is Lipschitz continuous.},
which exhibits a convergence rate of $\mathcal O(1/\sqrt{Bt}+ 1/t)$~\cite{minibatchSGD}, where $B$ denotes the mini-batch size.
Consequently, we use the following model for the reference curve:
\begin{equation}
L_P(t) := \frac{1}{\theta_0 t^{\theta_1} + \theta_2} + \theta_3,
\end{equation}

where $\theta_0$, $\theta_1$, $\theta_2$ and $\theta_3$ are non-negative coefficients. 
An example of online curve fitting when training a PMF model is
depicted in Fig.~\ref{fig:3}. For the slow-convergence curves, we set:
\begin{equation}
\ell_p(t) := \frac{1}{\theta_0 t^2 + {\theta_1} t + \theta_2} + \theta_3,
\end{equation}

as in~\cite{slaq}, where $\theta_0$, $\theta_1$, $\theta_2$ and $\theta_3$ are also non-negative.  We
utilize a non-negative least squares solver~\cite{solver} to fit the points~in all the curves. Before
doing curve fitting, the loss values are always passed through an exponentially weighted moving average (EWMA) 
filter to remove outliers. 

Retaking the MF training example, Fig.~\ref{fig:4} gives
the error when estimating the loss values for an increasing
number of steps ahead from the ``knee''. Here the prediction error is
the difference between the actual and estimated loss values, divided by the actual one.
As shown in Fig.~\ref{fig:4}, both the reference curve $L_P(t)$ and the slow-convergence model $\ell_p(t)$ 
achieve a prediction error inferior to $1.5\%$, even when predicting up to $200$ steps in advance.
Finally, Fig.~\ref{fig:5} shows how estimation improves as more and more data points are collected 
for fitting the curve $\ell_p(t)$, irrespective of how many steps are predicted in advance.  
\medskip

\noindent \textbf{Automatic ``knee'' detection.} To favor convergence, the scale-in scheduler 
never eliminates a worker before passing the ``knee''. The reason is to maximize the time that the~ML algorithm stays 
within the region of fast convergence, only scaling down the number of workers once the learning curve starts to flatten out. 
There are several methods out there to automatically identify ``knee'' points from discrete data~(e.g.,  \cite{energy} and Kneedle~\cite{kneedle}),
which can be plugged into \mlless\phantom{} without further adaptations. For all ML jobs considered in this work, though, a simple threshold-based 
heuristic on the first derivative of the learning curve, i.e.,  the slope of the tangent line, worked well in all cases. 
\medskip

\begin{figure}
\centering
 \includegraphics[width=0.33\textwidth]{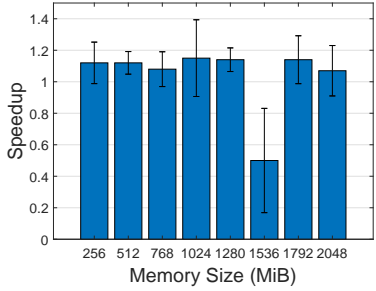}
\caption{Speedup of two threads relative to single-thread performance within a function as memory size is varied.}
\label{fig:intraop}
\vspace{-10pt}
\end{figure}

\noindent \textbf{Eviction policy.}  By default, 
the scheduler eliminates the worker with the lowest-quality model replica
from the pool. If the significance filter is enabled, i.e., $v_t > 0$, 
the leaving worker $p$ stores~its local replica of the model $\mathbf{\widetilde{x}}_{t, p}$
to external storage before terminating itself.  Subsequently, each active worker $p^\prime \neq p$
downloads $\mathbf{\widetilde{x}}_{t, p}$ from external storage and averages~it with its local model, i.e., 
$\mathbf{\widetilde{x}}_{t, p^\prime} = \frac{1}{2}\left(\mathbf{\widetilde{x}}_{t, p}+ \mathbf{\widetilde{x}}_{t, p^\prime}\right)$, to 
reintegrate the non-significant updates from the leaving worker into its local model.
For $v_t = 0$, we note that the ISP model reduces to the BSP model (see \ref{sec:analysis}), and thus,
this additional one-shot synchronization is unneeded.  

\section{Implementation} 
\label{sec:impl}

We implement \mlless~by extending PyWren-IBM~\cite{pywrenibm} \textemdash~a Python-based serverless data analytics framework.
Although PyWren-IBM allows users to execute user-defined functions (UDFs) as serverless workers, it is painful slow for ML training~\cite{cirrus}.
So, to make \mlless\hphantom{} competitive with the ``serverful'' ML libraries, we reimplemented part of PyWren-IBM's runtime, the models and optimizers
(SGD, SGD~with momentum, ADAM, etc.), including sparse data structures, in Cython\footnote{\url{http://cython.org/}}, using C-style static type 
declarations that allow compilation. ML frameworks such as PyTorch rely heavily on C++ and math libraries such as Intel 
MKL\footnote{\url{https://www.nsc.liu.se/software/math-libraries/}}
to speed up computations on CPU. Thus, a pure Python implementation for \mlless~would have degraded system throughput to
a large extent.
\medskip

\noindent \textbf{Intra-level parallelism}. A final important observation to make is the lack of thread-level parallelism of IBM Cloud Functions. For the maximum
memory allocation of $2$GB, we can get the equivalent of one vCPU. This implies that we cannot exploit data parallelism 
within a worker as ML systems such as PyTorch do \textemdash~e.g., through OpenMP. To corroborate this, we ran a small
micro-benchmark. Concretely, a probabilistic matrix Factorization (PMF)~\cite{pmf} model was trained running SGD on either 
one or two threads. We measured the per-step running time of~the computations inside the workers and computed
the speedup of the two threads relative to single-threaded performance. The results are plotted in Fig.~\ref{fig:intraop}.
As can be seen in the figure, PyTorch is able to extract some parallelism within a worker,  but it is clearly not enough to exploit data parallelism.
For workers with $1536$ MiB of memory, we even found that the 
performance with $2$ threads was worse than single-threaded
performance due to a misallocation of resources.

\section{Evaluation}
\label{sec:eval}
In this section, we perform a series of experiments to answer the following main questions: 
\begin{itemize}
\item \emph{What is the individual contribution of each optimization to cost-efficiency?} For we perform a number of micro-benchmarks.
\item \emph{Is it possible to achieve better cost-efficiency with an optimized FaaS platform than a VM-based, i.e., ``serverful'' substrate  (IaaS) for distributed ML training?} For
we run several ML training jobs of different flavors, including both dense and sparse ML models. We use PyTorch~\cite{pytorch}, a specialized ``serverful'' ML library, but also 
a non-specialized, serverless data-analytics system, to determine what happens when FaaS is not specialized to model training.  
\item \emph{Is the ISP consistency model much more effective than other bounded staleness models such as SSP?} The goal is to infer what type of synchronization strategy is more appropriate  
for the indirect communication model of FaaS cloud platforms.
\end{itemize}

To conclude, we also evaluate the scalability of \mlless~on the exchange of intermediate training state, which is the main system bottleneck due to the impossibility of 
function-to-function communication. Note that the scalability of object storage for the storage of training datasets has already been assessed in other works~\cite{cirrus}.
Consistent with these works, we have observed no bottleneck for the download of mini-batches from IBM COS. 

\subsection{Methodology}

\noindent \textbf{Competing systems.}  Concretely, we compare \mlless~with the following implementations:
\begin{itemize}
  \item \textbf{Distributed PyTorch~\cite{pytorch} on CPUs}.  Due to the lack of hardware accelerators such as GPUs and TPUs~\cite{tpu} in
	IBM Cloud Functions, we run PyTorch v1.8.1 with Intel MKL enabled in a cluster of VM servers utilizing all the available cores. We use the all-reduce operator
	of Gloo~\cite{gloo}---rule of thumb for CPU training---, a MPI-like library  for cross-machine communication. 
	Mini-batches are downloaded from IBM COS.
 
   \item \textbf{PyWren-IBM~\cite{pywrenibm}}. We use PyWren-IBM as non-specialized serverless ML representative.  PyWren-IBM has been optimized
	to run on IBM Cloud Functions. Since it is a MapReduce framework,  we leverage the \texttt{map} phase to process mini-batches in parallel
	and \texttt{reduce} tasks to aggregate the local updates. All communication is done through IBM COS, including the sharing of updates, to keep
	its pure serverless, general-purpose architecture.
\end{itemize}

\noindent \textbf{Datasets.} We utilize three datasets in our evaluation. First, we use
the \textbf{Criteo} display ads dataset~\cite{criteo}, which contains $47$M samples and has $11$GB of size in total. Each sample consists of 
$13$ numerical and $26$ categorical features. Before training, we normalize the dataset. In particular, we manipulate this dataset in two forms. On one hand, we only~use the $13$ numerical features to produce
a \emph{dense} dataset. On the other hand, we hash all the categorical dimensions to a sparse vector of size $10^5$ (``hashing trick''), along with the $13$ numerical features, to produce a \emph{sparse} dataset. In this way, we can evaluate the impact of sparsity on the cost-efficiency of FaaS over IaaS as another evaluation dimension. 

Also, we use  the \textbf{MovieLens-$10$M} and \textbf{MovieLens-$20$M} datasets~\cite{movielens}. The former consists of $10$M movie reviews from $N_u = 10,681$ users 
on $N_m = 71.567$ movies. The latter bears around $20$M reviews from $N_u = 27,278$ users on $N_m = 138,493$ movies. Notice that all the datasets are (highly)-sparse to
verify \mlless~ support for sparse data. 
\medskip

\noindent \textbf{ML models.} As shown in Table~\ref{tab:eval}, 
we train different models on different datasets, i.e., Criteo for logistic regression (\textbf{LR}), and MovieLens-$10$M/$20$M for probabilistic matrix factorization (\textbf{PMF})~\cite{pmf}.  
Concretely, for \textbf{PMF}, we factorize the partially filled matrix of review ratings $\mathbf{R}$ of size $N_u \times N_m$ into  two  latent  matrices: $\mathbf{U}_{N_u \times r}$ and  $\mathbf{M}_{N_m \times r}$, such that $\mathbf{R} \approx  \mathbf{UM}$.
\medskip

\begin{table}[t]
    \centering
		\newcolumntype{P}[1]{>{\raggedright\arraybackslash}p{#1}}
		\caption{ML models, datasets, and experimental settings. $B$ means mini-batch size, and $r$ means targeted rank of PMF.} 
    \begin{tabular}{P{1cm}P{2cm}P{5cm}P{2.2cm}P{2.8cm}} \toprule
        Model & Dataset & Optimizer & \# Workers & Setting \\ \midrule
        LR & Criteo &  Adam &$12$, $24$&  $B = 6,250$ \\
        PMF & ML-$10$M & SGD + Nesterov momentum &$12$, $24$ & $B = 6,250$, $r = 20$ \\
        PMF & ML-$20$M & SGD + Nesterov momentum & $12$, $24$ & $B = 12$K, $r = 20$ \\ \bottomrule
    \end{tabular}
    \label{tab:eval}
		\vspace{-5pt}
\end{table}

\begin{table}[t]
    \centering
		\renewcommand{\arraystretch}{1.2}
    \newcolumntype{P}[1]{>{\raggedright\arraybackslash}p{#1}}
		\caption{Pricing from IBM Cloud (\texttt{us-east}, April. 2021).}
    \begin{tabular}{P{5.2cm}P{4.8cm}P{4cm}} \toprule
        Instance type & Description & Price \\ \midrule
        C1.4x4 ($4$vCPUs, $4$GB RAM) & \mlless~messaging service & $0.15$ \$/hour \\
				M1.2x16 ($2$vCPUs, $16$GB RAM) & Redis & $0.17$ \$/hour \\
				Functions ($1$vCPU, $2$GB RAM) & \mlless~worker & $3.4$x$10^{-5}$  \$/s  ($0.122$ \$/hour)\\ \midrule
				B1.4x8 ($4$vCPUs, $8$GB RAM) & Four PyTorch workers &   $0.2$ \$/hour \\ \bottomrule
    \end{tabular}
    \label{tab:pricing}
		\vspace{-10pt}
\end{table}

\noindent \textbf{Setup.}
The VM instances used for the experiments are deployed on the IBM Cloud. 
Unless otherwise noted, when running \mlless, we use two VM instances: a C1.4x4 instance ($4$vCPUs, $4$GB of RAM) to 
host the messaging service, and a single M$1.2$x$16$ instance ($2$vCPUs, $16$GB of RAM) to
deploy Redis, in addition to the chosen number of FaaS workers. To 
use as many workers for PyTorch as \mlless, the PyTorch cluster will consist of
$3$ or $6$ B1.4x8 instances ($4$vCPUs, $8$GB of RAM).  All instances have a $1$Gbps NIC.
As \mlless~workers, we use the largest-sized functions of $2$GB of memory. All VMs and
\mlless~workers are deployed on the same region (\texttt{us-east}).
\medskip

\noindent \textbf{Cost computation.} The way to account for cost  is vital to measure cost-efficiency, so we included all costs incurred
by \mlless. That is, \mlless's~cost comprised the individual cost of  each component, namely the serverless workers  plus the two VM instances:
one to host the signaling service (C1.4x4 instance), and the other to exchange the intermediate training state (M$1.2$x$16$). 
Although IBM Cloud charges hourly per VM type,
we are ``conservative'' and assume that VM cost is measured as $\$/s$. This clearly favors PyTorch,
as it equates the reservation-based model of VM instances with the ``pay-per-usage'' model of serverless
computing, whereas the price of functions per time unit is proportionally much higher than VM instances.
In practice, PyTorch would cost more. To verify this, Table~\ref{tab:pricing} reports the exact pricing of each component. As shown in this table,
a serverless worker have the same amount of provisioned resources as a PyTorch worker: $1$vCPU, $2$GB RAM. The only difference is
that while serverless workers are provisioned individually, Pytorch workers are provisioned in groups of four due to their deployment
on VM instances. Consequently, a PyTorch worker costs $\frac{0.2\$/\textrm{hour}}{4} = 0.05\$/\textrm{hour}$, which is more than
two times cheaper than a serverless worker: $0.122$ \$/hour. 

We finally observe that the use of VMs confers some extra advantage to PyTorch, as the exchange of intermediate training state across all processes (\texttt{AllReduce}) can leverage the fact that some PyTorch workers are physically located in the same machine.
\medskip

\noindent \textbf{Sanity check.} Before conducting any experiment, we first 
realized a sanity check to make sure that all the models were identical in all systems. To
this end, we fixed a random~seed, and trained all models in each system using a 
single worker. We then verified that the convergence rate at each step was exactly the same in all systems. 
This guarantees no technical advantage of one system over the other due to subtle model 
artifacts such as $\ell_1$- and $\ell_2$-regularization, etc.

\begin{figure*}[t]
\centering
\subfloat[][\textbf{\textbf{LR}, \textbf{Criteo dense}.}]{
  \includegraphics[width=0.325\textwidth]{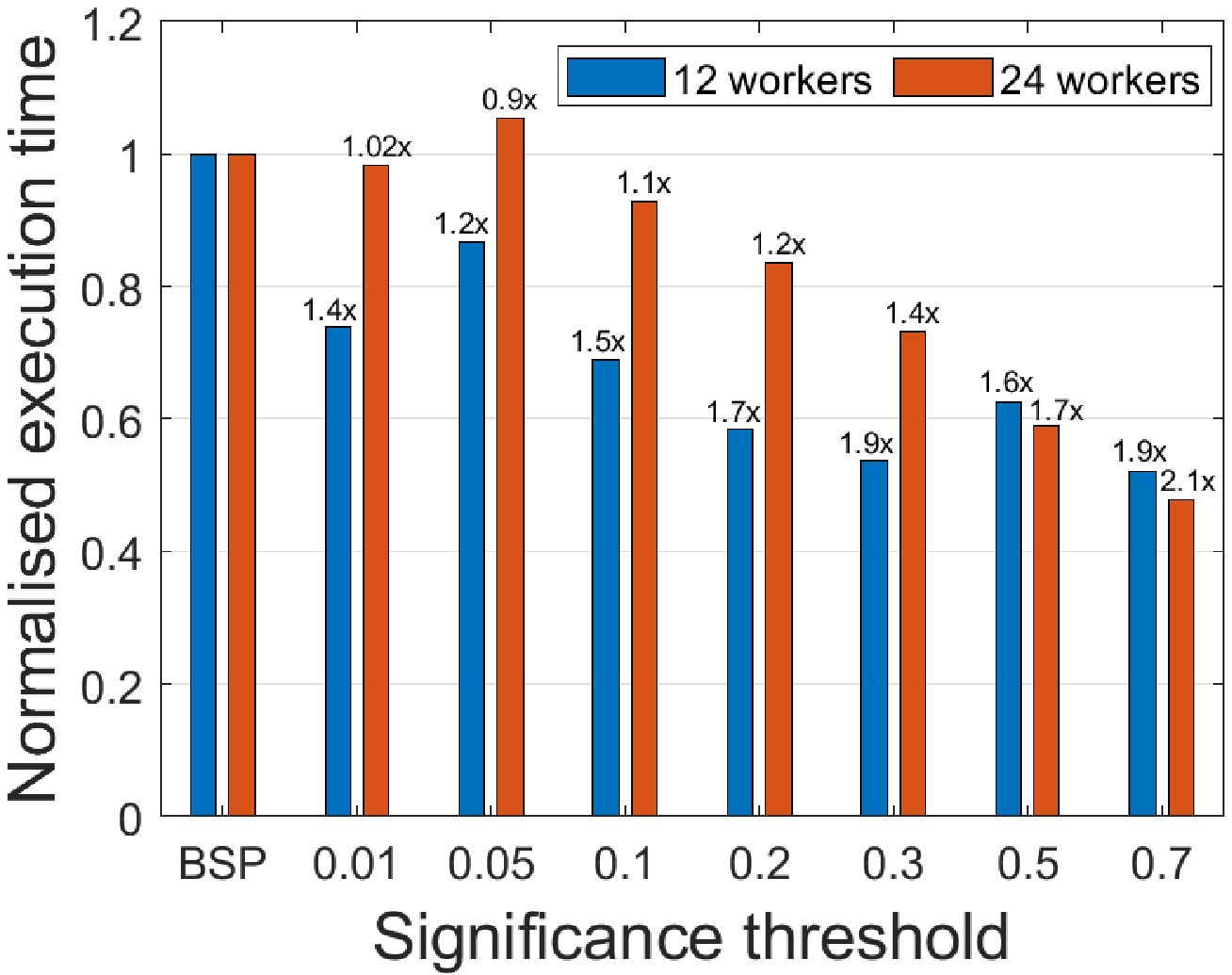}
	\label{fig:6}
}
\subfloat[][\textbf{\textbf{LR}, \textbf{Criteo sparse}.}]{
  \includegraphics[width=0.325\textwidth]{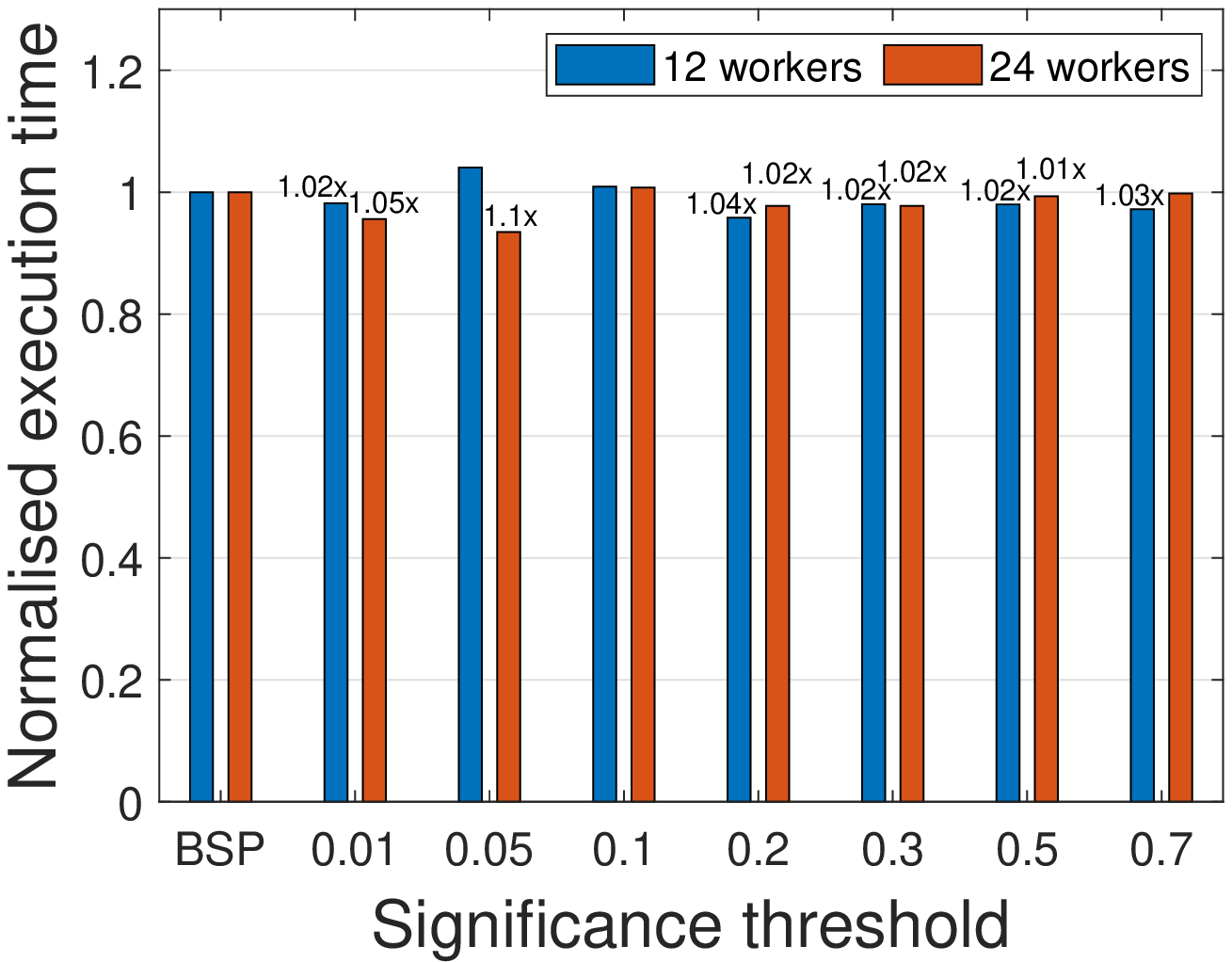}
	\label{fig:6}
}\quad
\subfloat[][\textbf{PMF}, \textbf{ML-$10$M}.]{
  \includegraphics[width=0.325\textwidth]{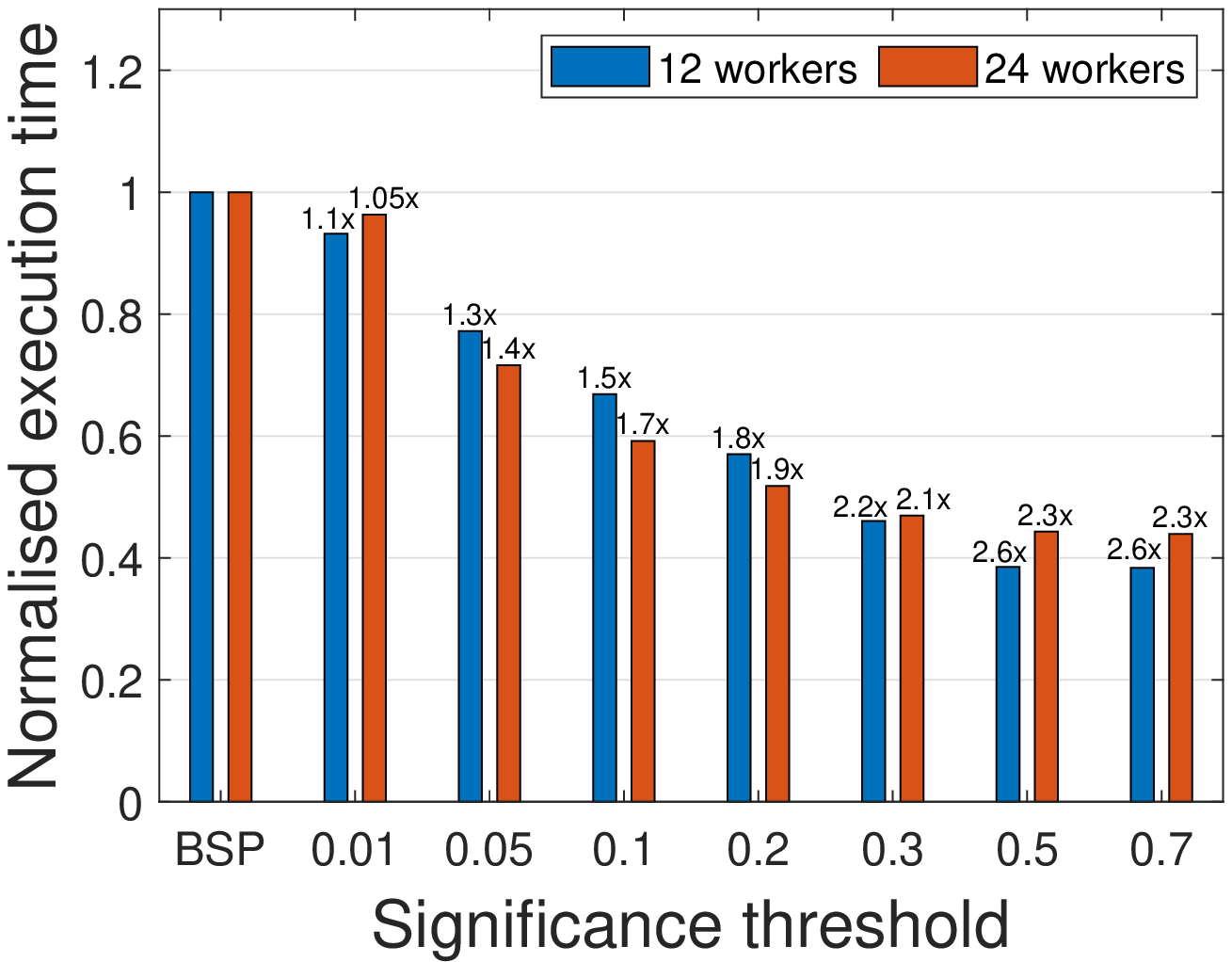}
	\label{fig:7}
}
\subfloat[][\textbf{PMF}, \textbf{ML-$20$M}.]{
  \includegraphics[width=0.325\textwidth]{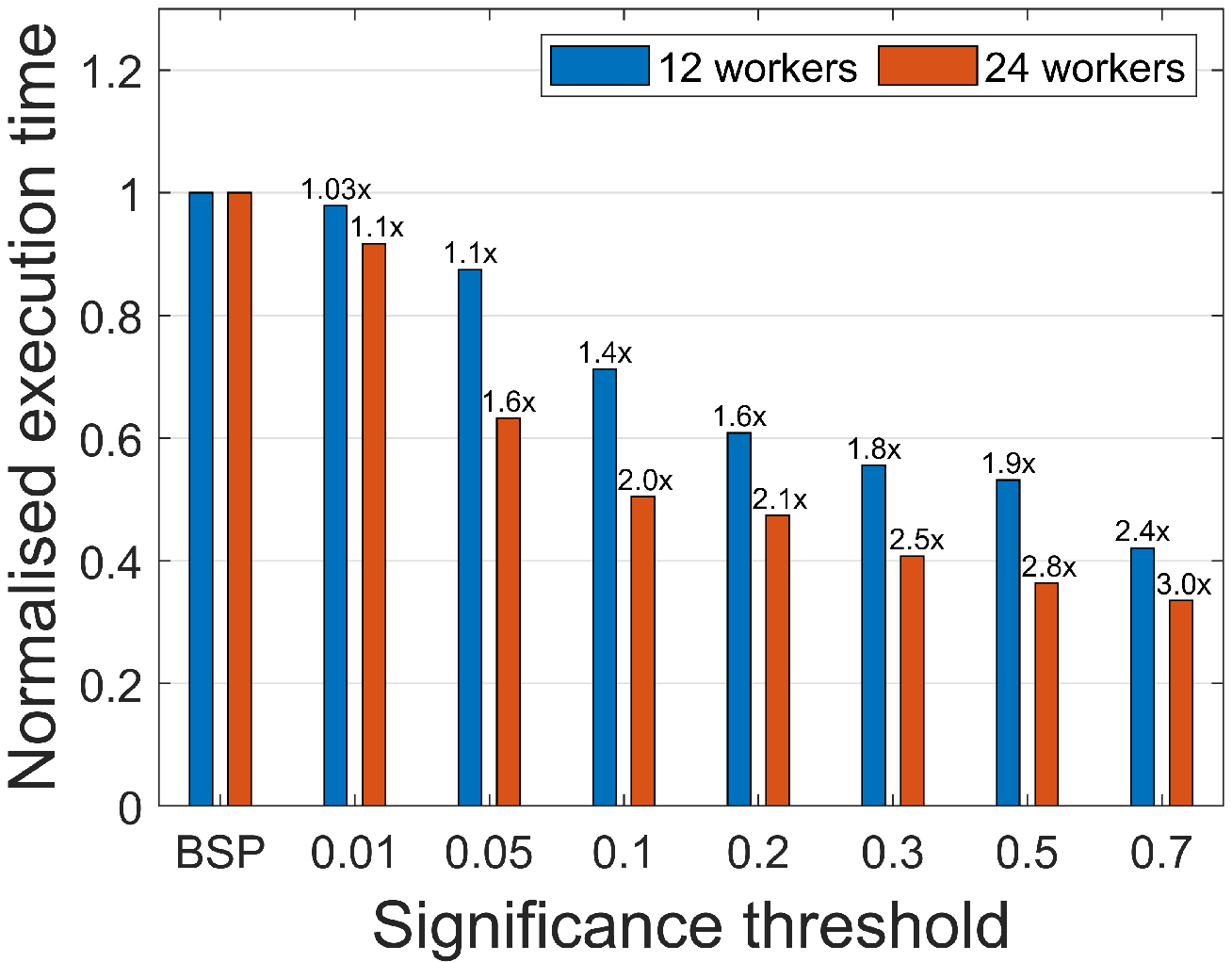}
	\label{fig:8}
}
\caption{Normalized execution time until convergence as the significance threshold $v$ increases.}
\label{fig:isp}
\vspace{-10pt}
\end{figure*}

\begin{figure}
\centering
\subfloat[][\textbf{LR}, \textbf{Criteo} (BCE=$0.58$)]{
  \includegraphics[width=0.275\textwidth]{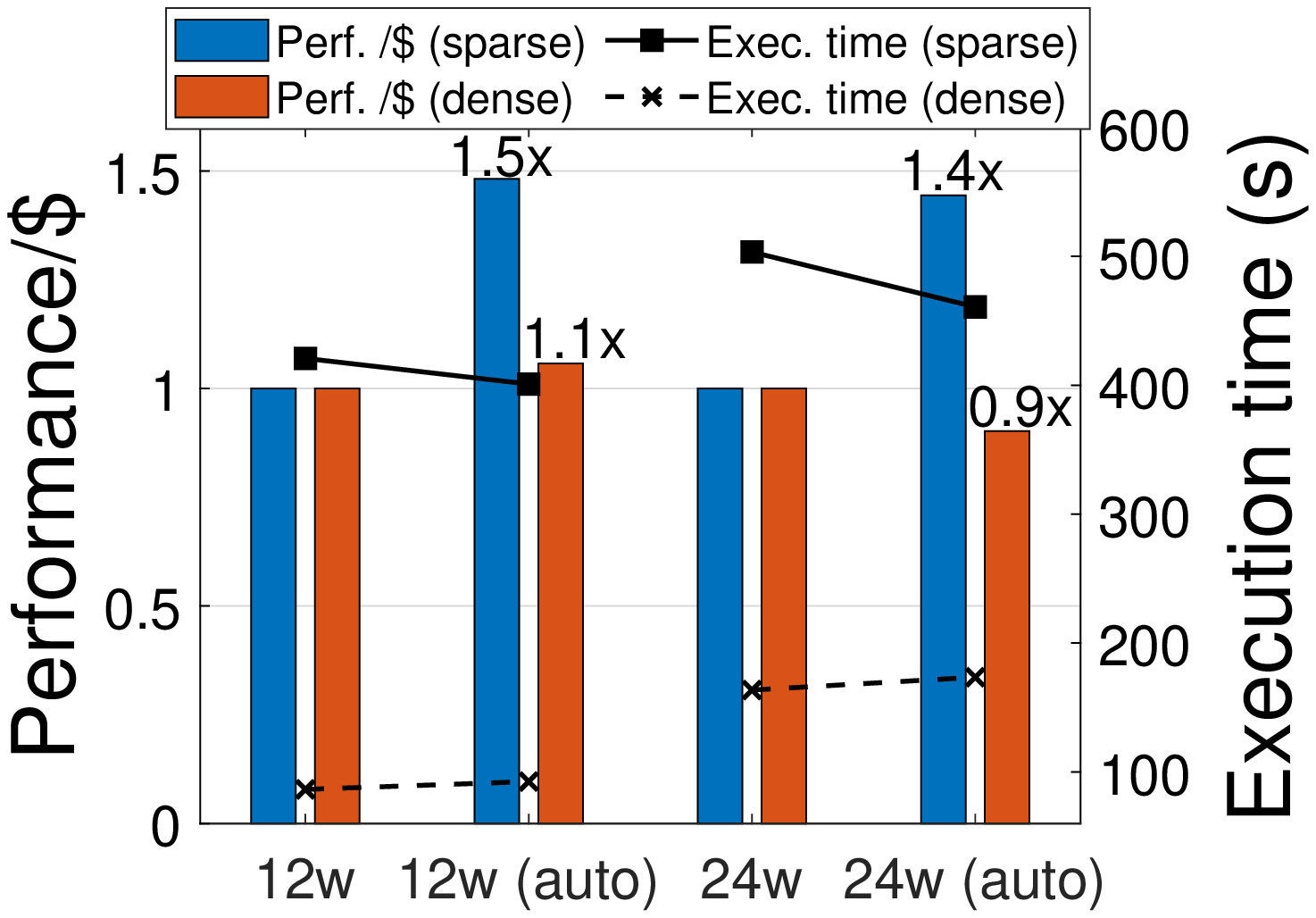}
	\label{fig:6}
} \quad
\subfloat[][\textbf{PMF}, \textbf{ML-$10$M} (RMSE=$0.72$), \\ \textbf{ML-$20$M} (RMSE=$0.77$).]{
  \includegraphics[width=0.275\textwidth]{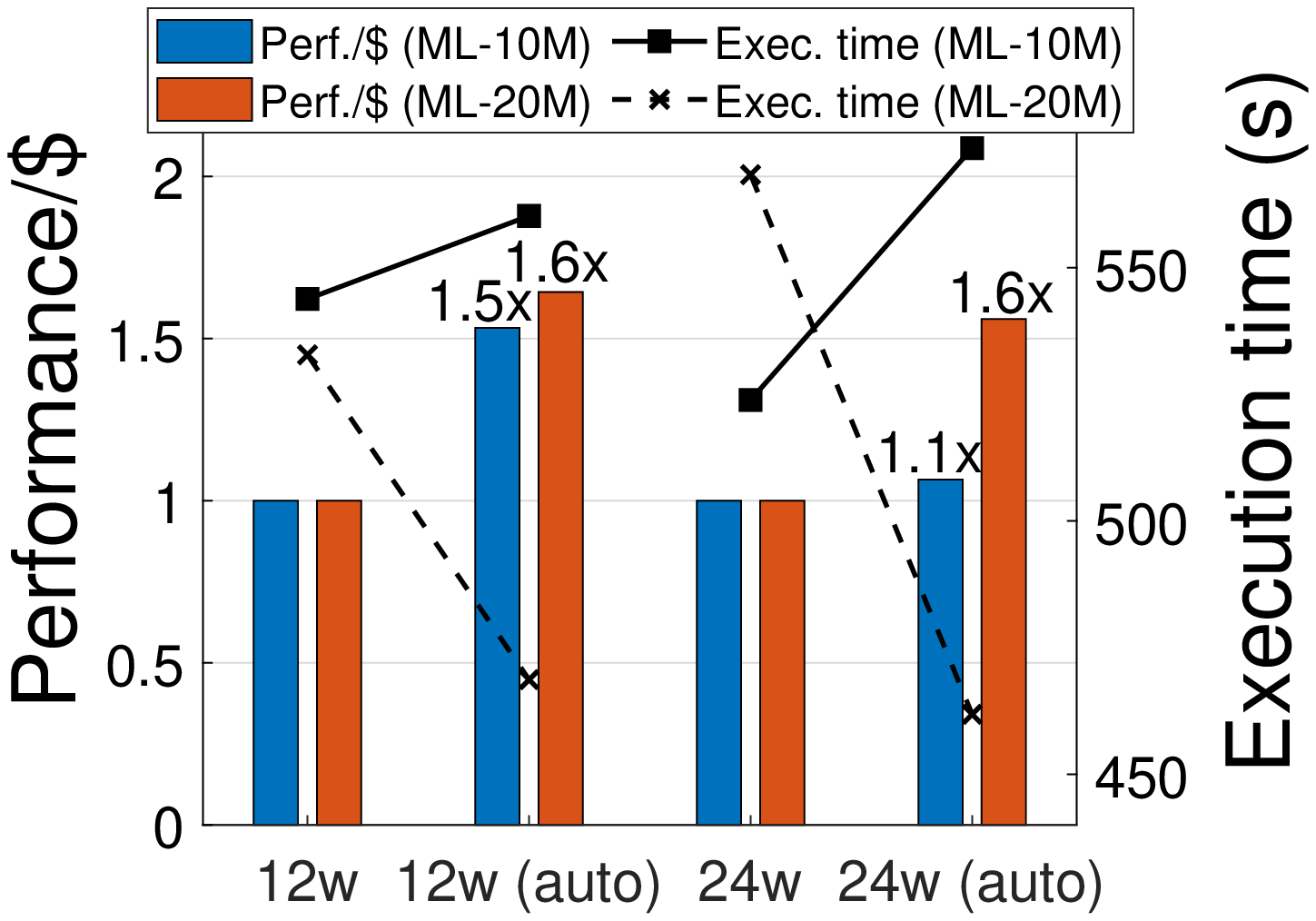}
	\label{fig:7}
}
\caption{Effect of the scale-in auto-tuner. Two metrics are used: Perf/\$\hspace{1pt}(bars; left axis); execution time\hspace{1pt}(lines; right axis).}
\label{fig:auto}
\vspace{-10pt}
\end{figure}

\subsection{Micro-benchmarks}

To better understand the individual contribution of each optimization to cost-efficiency, we run a number of micro-benchmarks.
\medskip

\subsubsection{Significance Filter}

We first evaluate the effectiveness of ISP to
improve system throughput as the significance threshold $v$ increases,~i.e., it becomes
more strict, thereby filtering out more aggressively  those updates than produce small
changes to the model. 
As a metric, we make use of 
the \textit{execution time until algorithm convergence}. For \textbf{LR}, we fix a Binary Cross Entropy (BCE) loss threshold
of $0.58$,  and stop training when the threshold is reached. For \textbf{PMF}, we set a Root Mean Squared Error (RMSE)
loss threshold of $0.82$. Because of the ``pay-as-you-go'' model of cloud functions, the key point to note here is that by \emph{decreasing the execution time, ISP cuts the cost forthwith}. 
We use the BSP synchronization model.

The results are plotted in Fig.~\ref{fig:isp}. When training PMF on both MovieLens
datasets, ISP is able to improve system throughput significantly with no side effects on
convergence. For ML-$20$M, speedup reaches $3$X. This result indicates that with
effective optimizations in communication, FaaS-based ML training can
be importantly enhanced despite the impossibility of function-to-function communication. 
The results for LR reinforce this idea and give further sense of the potential improvements 
brought by ISP. Non-surprisingly, ISP has a stronger effect on communication for Criteo dense
compared to sparse logistic regression. Actually, the difference in execution time of about $2$X between sparse and dense LR mostly lies in model sparsity. More precisely, sparse LR produces highly sparse gradients per se due to the ``hashing trick''. On one hand, \mlless~filters zeroed features, which acts as an intrinsic filter in communication and reduces the size of updates. On the other hand, the ``hashing trick"' results in 
dissimilar gradient updates, which lend themselves to little compression. Consequently, the small gains in communication end up being more significant in dense LR, despite the smaller model size of $13$ numerical features. 
\medskip

\begin{figure*}
\centering
\subfloat[][\textbf{\textbf{LR}, \textbf{Criteo dense}.}]{
  \includegraphics[width=0.325\textwidth]{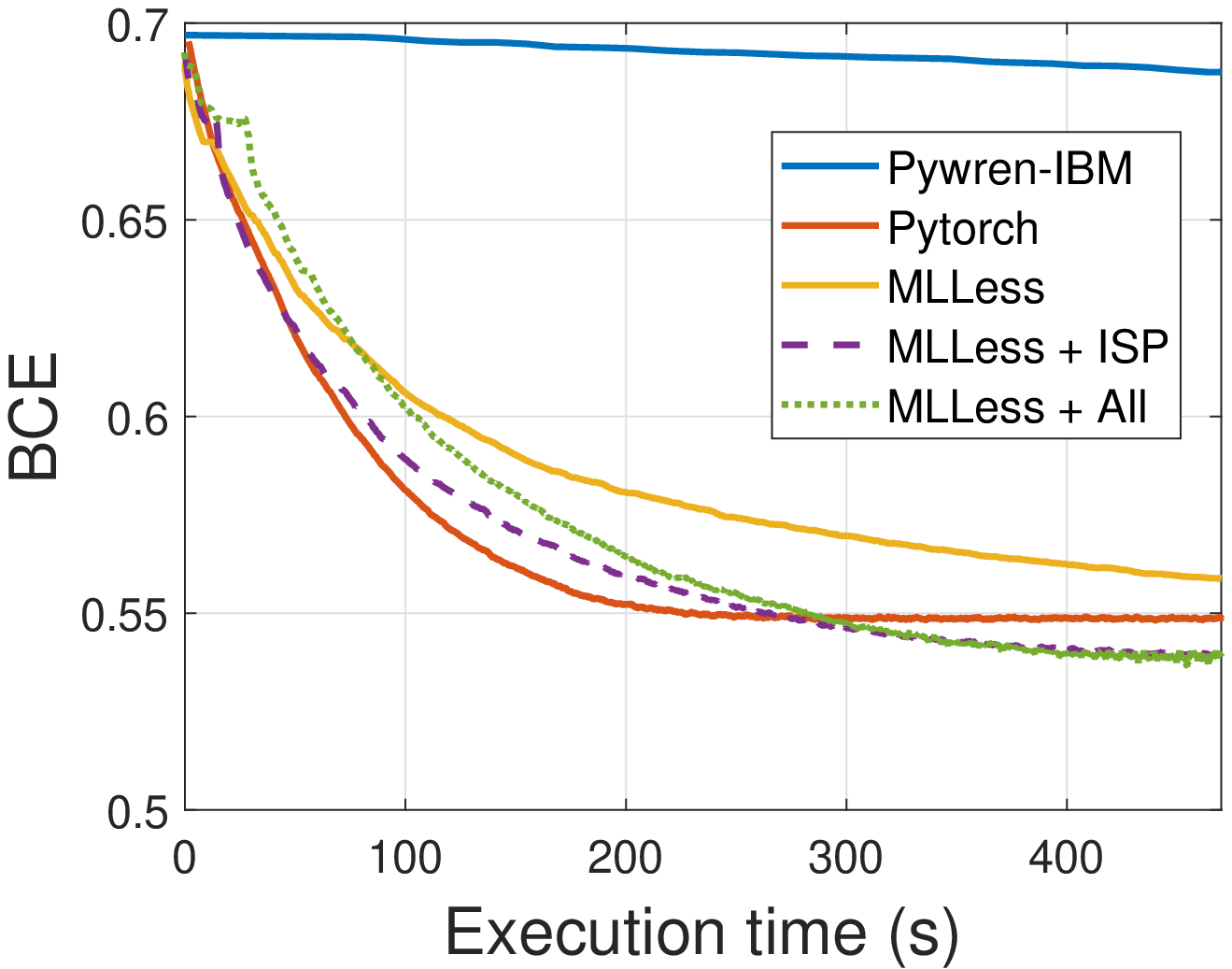}
	\label{fig:10}
}
\subfloat[][\textbf{\textbf{LR}, \textbf{Criteo sparse}.}]{
  \includegraphics[width=0.325\textwidth]{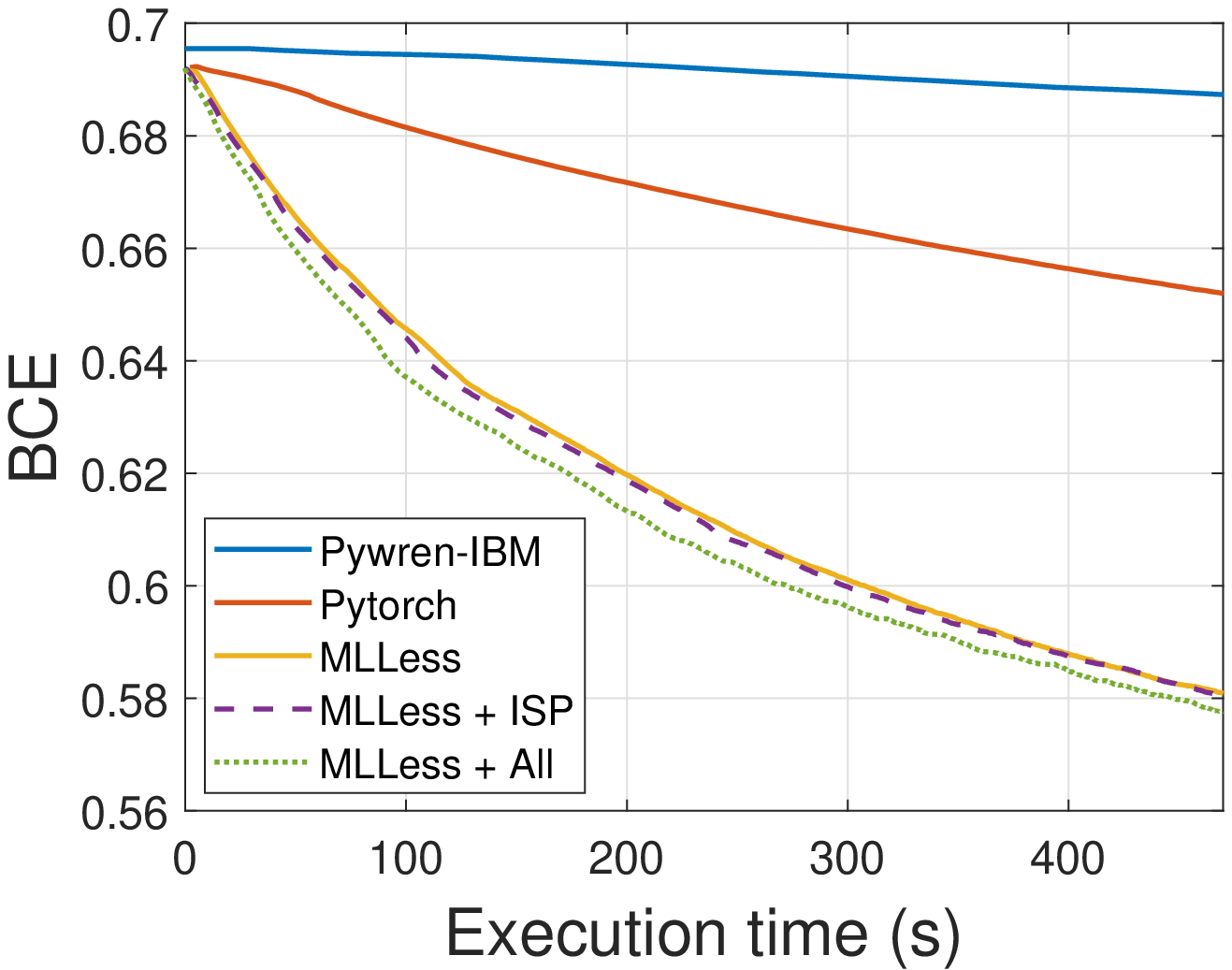}
	\label{fig:11}
}
\quad
\subfloat[][\textbf{PMF}, \textbf{ML-$10$M}.]{
  \includegraphics[width=0.325\textwidth]{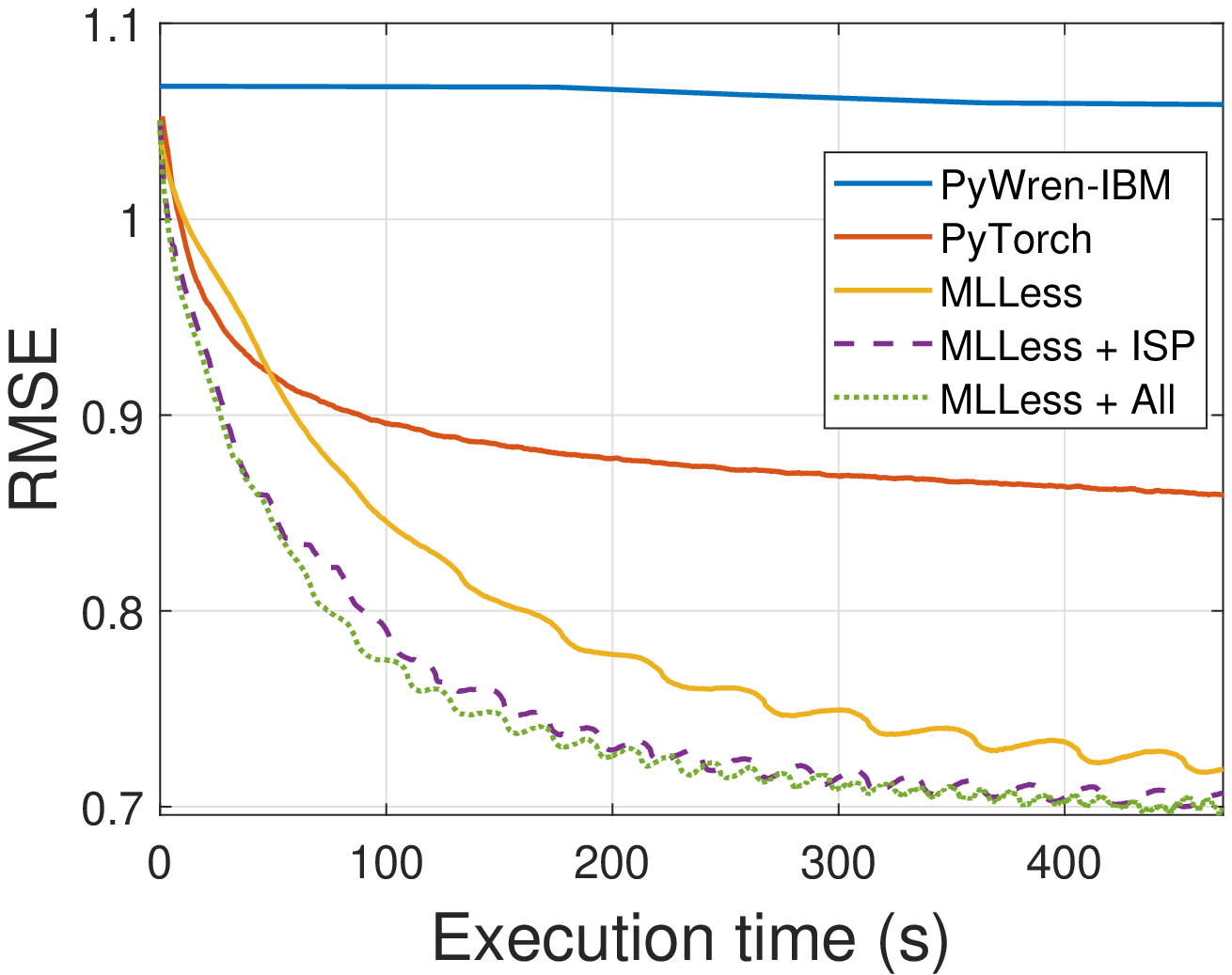}
	\label{fig:12}
}
\subfloat[][\textbf{PMF}, \textbf{ML-$20$M}.]{
  \includegraphics[width=0.325\textwidth]{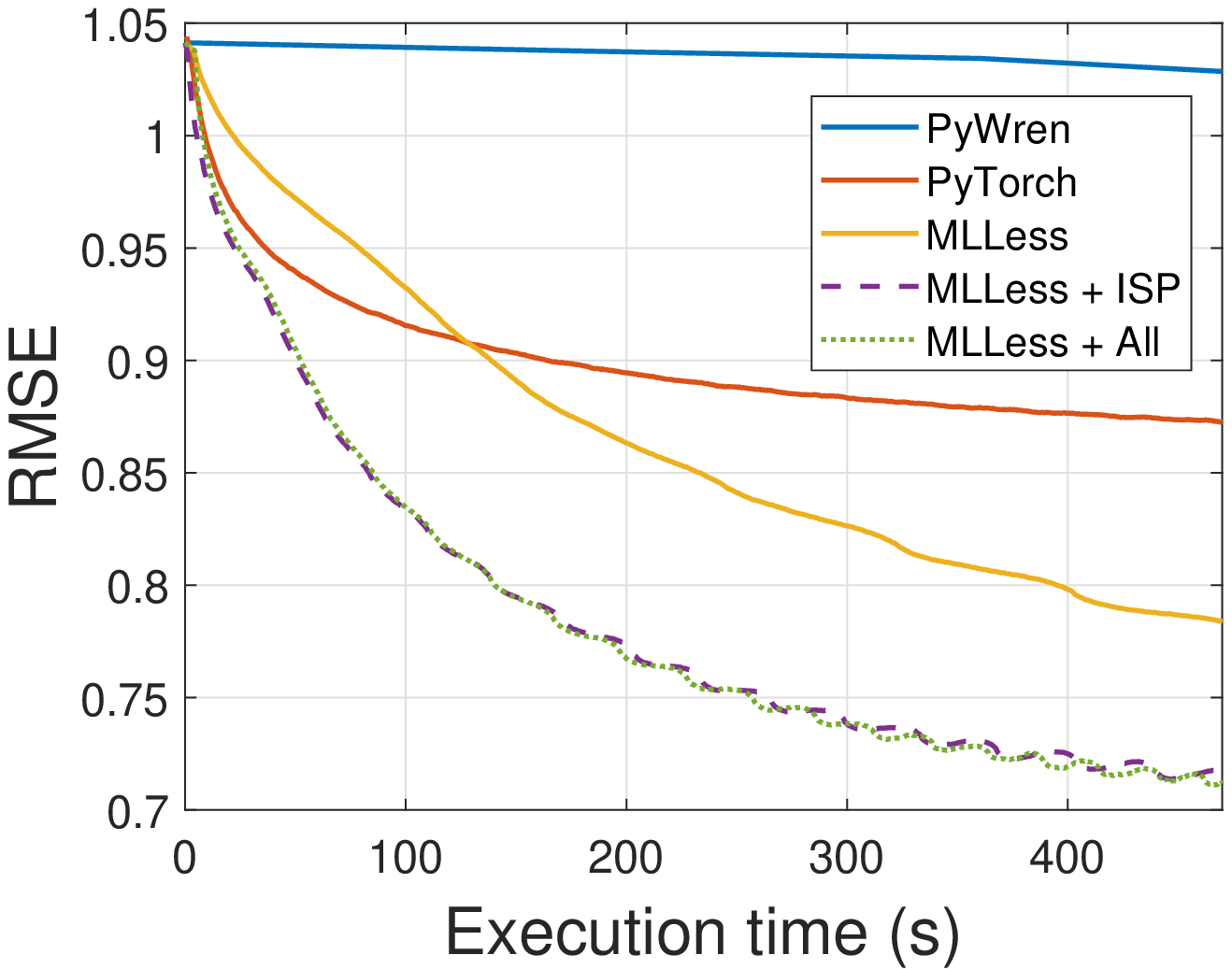}
	\label{fig:13}
}
\caption{Loss vs. time comparison between PyTorch, PyWren-IBM and \mlless~with different variants: BSP synchronization (\textbf{\mlless}), 
ISP synchronization (\textbf{\mlless~+ ISP}) and ISP synchronization + auto-tuner (\textbf{\mlless~+ All}), for $24$ workers.}
\label{fig:perf}
\vspace{-10pt}
\end{figure*}

\begin{figure*}
\centering
\subfloat[][\textbf{\textbf{LR}, \textbf{Criteo dense}.}]{
  \includegraphics[width=0.325\textwidth]{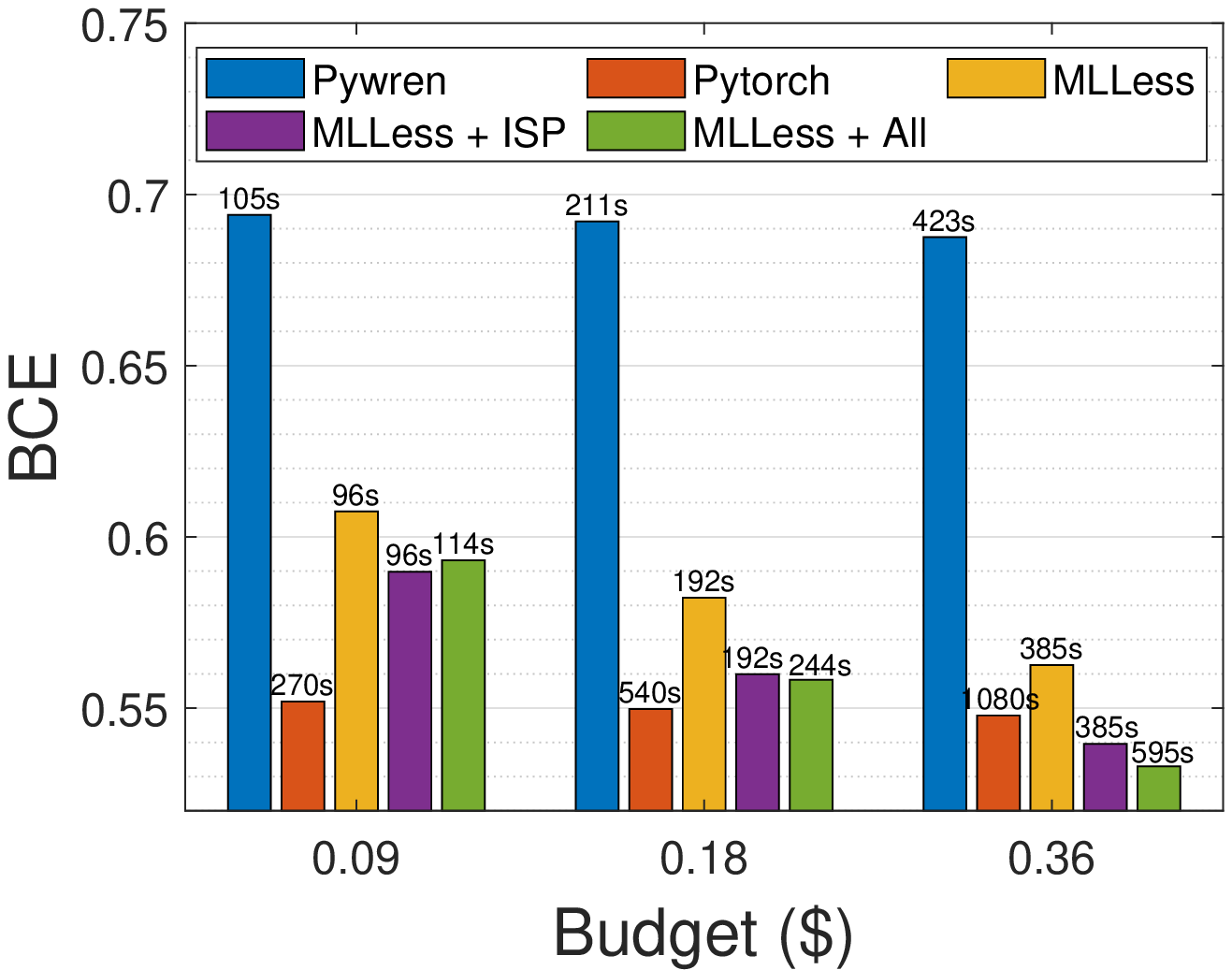}
	\label{fig:14}
}
\subfloat[][\textbf{\textbf{LR}, \textbf{Criteo sparse}.}]{
  \includegraphics[width=0.325\textwidth]{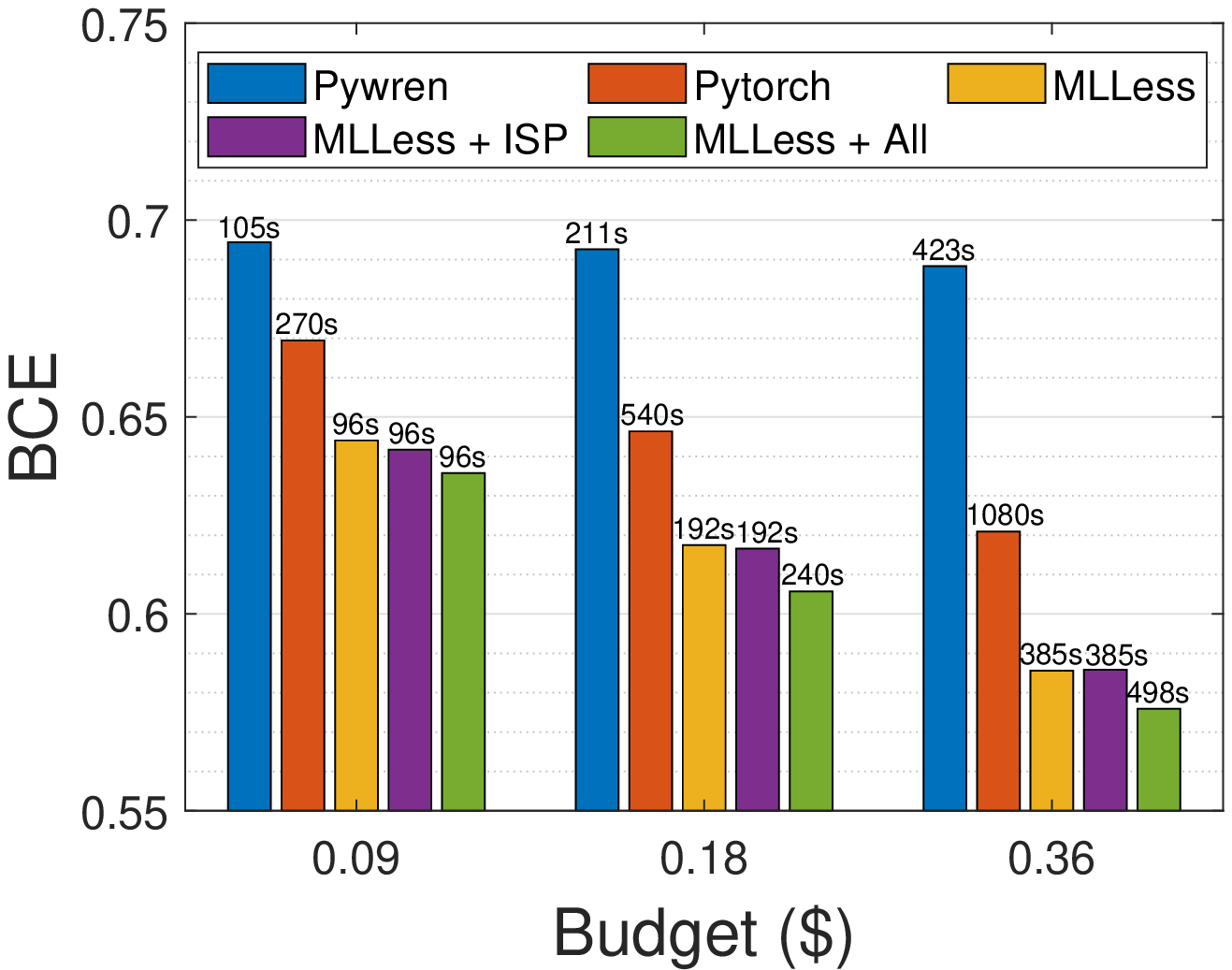}
	\label{fig:15}
}\quad
\subfloat[][\textbf{PMF}, \textbf{ML-$10$M}.]{
  \includegraphics[width=0.325\textwidth]{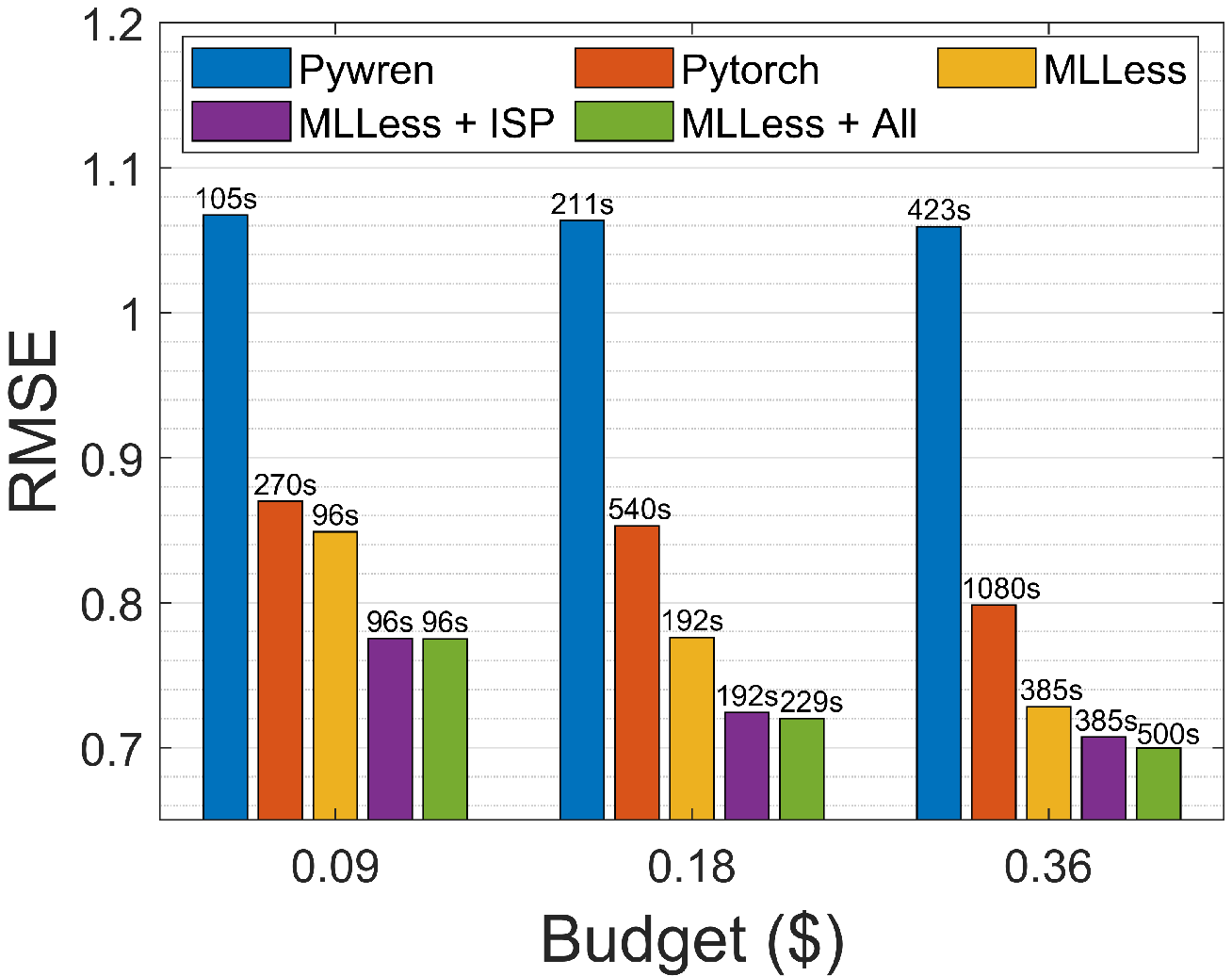}
	\label{fig:16}
}
\subfloat[][\textbf{PMF}, \textbf{ML-$20$M}.]{
  \includegraphics[width=0.325\textwidth]{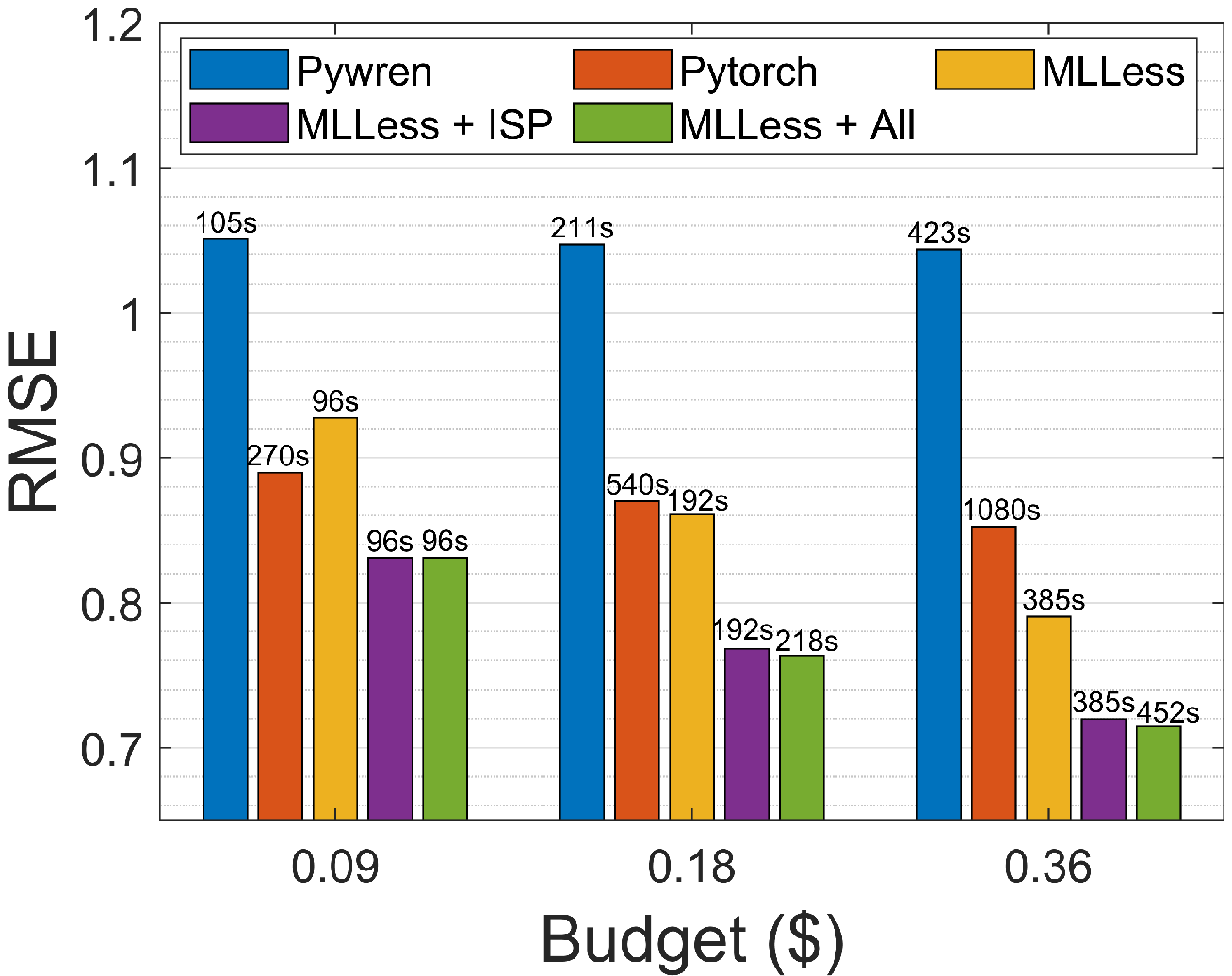}
	\label{fig:17}
}
\caption{Cost vs. loss comparison between PyTorch, PyWren-IBM and \mlless~with different variants: BSP synchronization (\textbf{\mlless}), 
ISP synchronization (\textbf{\mlless~+ ISP}) and ISP synchronization + auto-tuner (\textbf{\mlless~+ All}), for $24$ workers. The numbers above the
bars report the maximum execution time affordable with each possible budget.}
\label{fig:cost}
\vspace{-10pt}
\end{figure*}

\subsubsection{Scale-in auto-tuner.} 
\label{sec:tuner}

We assess in isolation the effect of scaling down dynamically 
the amount of workers. To draw an unbiased picture of its performance, it is insufficient to only look at the cost profile.
A bad adjustment policy could trade off convergence speed for cost, e.g., by aggressively evicting
workers from the pool. Ideally, both metrics should dwindle in parallel. To capture the effect
of the auto-tuner in a single metric, we use Perf/\$ defined as: Perf/\$$:= \frac{1}{\mathrm{Exec. time(s)}}\times\frac{1}{\mathrm{Price}(\$)}$,
so that any improvement in latency, cost, or both, caused by the auto-tuner is reflected in this composite metric. We also use~raw execution time as a
secondary metric, to detach~the \$-cost normalization effect. For Perf/\$, higher is better.
As before, we run all the ML algorithms until
convergence, defined as a threshold on the observed loss. Concrete values for thresholds
are given in the caption of Fig~\ref{fig:auto} itself. For the
scale-in auto-tuner, we set
the scheduling interval to $20$s and fix the parameter $\Delta$ at a half of  the
scheduling epoch, that is, $\Delta = 10$ sec. 

Results are illustrated in Fig~\ref{fig:auto}.
For sparse LR, the results of the auto-tuner are excellent. The auto-tuner improves the Perf/\$ between $1.4$X-$1.5$X, while
reducing the running time slightly by up to $10\%$. For dense LR, the auto-tuner~in isolation is only capable of slightly increasing the Perf/\$ for $12$ workers, leading to an improvement of
$1.1$X over the baseline. For $P = 24$ workers, the Perf/\$ worsened a little bit due to a $~9\%$ underestimation of the original convergence rate caused by an imprecise fitting of the
reference curve $L_P(t)$. We leave for future work the development of a more precise estimation method for the reference curve to prevent any degradation of  Perf/\$.

Interestingly, the fact that the execution time
increases with  more workers for the LR use case is attributable to a loss of \emph{statistical efficiency}~\cite{statistical} due to weak scaling, rather than to a
poor scalability of \mlless. To corroborate this claim, we repeated the same experiment, but now adjusting the mini-batch 
size $B$ as we varied the number of workers to keep the global batch size $B_g$ the same at all times. We got comparable results, as
listed in Table~\ref{tab:statistical}, which  shows that the converge rate was equivalent in all worker configurations for a constant $B_g$. By adapting the mini-batch size $B$, model replicas
synchronized more frequently as the number of workers grew, thus preserving statistical efficiency.
\begin{table}[h]
   \vspace{-5pt}
    \centering
		\caption{Execution time of \textbf{LR, Criteo sparse} (BCE=$0.58$) as global batch remains constant. $B$ refers to mini-batch size.} 
    \begin{tabular}{cccc} \toprule
        \# Workers & $12$ ($B = 6,250$) & $24$ ($B = 3,125$) & $48$ ($B = 1,562$) \\ \midrule
        Execution time (s) & $437.1$  & $395.3$ & $426.3$  \\ \bottomrule
    \end{tabular}
    \label{tab:statistical}
		\vspace{-5pt}
\end{table}

For PMF, the results were also nice. For all
settings, the auto-tuner improved the Perf/\$. For the ML-$20$M dataset, it even led to $1.6$X gain since it also 
delivered a significant improvement in speed. The small degradation of around $7.1\%$ in execution
time for the ML-$10$M dataset was due to an aggressive purge of the workers too much early by the auto-tuner, which can be solved
by adjusting the ``knee'' finder (see \S\ref{sec:sched}). 

As a main insight, we see that
for users who must curtail costs,\emph{ a competent exploitation of
the FaaS ``pay-as-you-go'' model  as ours can be of great help to manage their budgets}.

\subsection{Cost-Efficiency}

In this section, we explore the cost-efficiency of \mlless, while seeking to answer the nagging question of whether FaaS can outperform
a VM-based, IaaS infrastructure for distributed ML learning.

\subsubsection{Performance comparison.} 
\label{sec:performance}

To assess the  benefits of a specialized system for serverless ML training,
we compare \mlless~against PyTorch~\cite{pytorch} and PyWren-IBM~\cite{pywrenibm}. We use PyTorch as a representative of an IaaS-based~ML library.  We adopt PyWren-IBM to verify that a vanilla,~non-specialized design of
 \mlless~would have been dramatically inefficient.
                                              																						
For this experiment, we execute three variants of \mlless. The baseline version using the BSP synchronization model, and labeled `\textbf{\mlless}' in the figures.
A second variant with ISP replacing BSP, termed `\textbf{\mlless~+ ISP}', and a third one, with both optimizations all at once, labeled `\textbf{\mlless~+ All}'. 
For ISP, we set  the significance threshold $v=0.7$. For the auto-tuner, we set the scheduling epoch to $20$s with $\Delta = 10$s. For all the systems,
we only report the results for $P = 24$ workers.  The trends were similar for $12$ workers.
\medskip

\noindent\textbf{Results}. The results are shown in Fig.~\ref{fig:perf}. The first observation to~be made is that PyWren-IBM is 
very inefficient in all jobs. This is mostly due to two facts. The first is that local updates~are communicated 
across workers through slow storage only,~i.e., IBM COS. The second is the non-specialization of PyWren-IBM 
for iterative ML training. 

The second observation to be made is that \mlless~is able to converge significantly faster than PyTorch. To give a sense
of the performance gap, let us focus on the \textbf{PMF+ML-$10$M} application. To achieve a loss value of $0.9$, \mlless\phantom{} needs
$23$ seconds while PyTorch gets to this loss only after $90$ seconds. This gap increases over time and
to converge to a ``prudent'' RMSE loss of $0.738$,  PyTorch spends $2,029$ seconds.
\mlless\phantom{}, however, reaches
this loss value after $140$ seconds. This yields 
a speedup of $14.49$X. 

For the \textbf{PMF+ML-$20$M} application, we get similar results. To converge to a loss of $0.821$, PyTorch
spends $1,800$ seconds. \mlless\phantom{} achieves this loss within $115$ seconds,  $15.65$X faster than PyTorch. 
Via thorough analysis, we found that PyTorch's speed is affected by the high sparsity of the datasets as it
occurs to TensorFlow~\cite{xdl}. Unlike PyTorch, \mlless~employs Cython to directly operate on sparse data and sparse
gradients, and hence, save significant time on serializing and deserializing data. In this way, \mlless~leads to faster convergence.
Either way, the gap between plain \mlless~and the optimizations~is significant for PMF, which demonstrates that an optimized
treatment of sparsity by its own cannot realize such savings. To wit, plain \mlless~spends $334$ seconds to reduce RMSE to $0.821$, $3$X
slower than with all the optimizations present.

The \textbf{LR+Criteo dense} job produced another interesting result, which further buttresses the idea that optimizations tailored
to the FaaS environment are crucial to be competitive against IaaS-based ML training. Unlike in all the other experiments, Pytorch is able 
to outperform plain \mlless\hphantom{} in this case. However, when the \mlless\hphantom{} optimizations are enabled, 
\mlless\hphantom{} overtakes Pytorch in the middle of the execution and is able to converge to a lower BCE level.

As a final observation, it is worth to note that the auto-tuner~does not slow down convergence in any job
as shown by the `\textbf{\mlless~+ All}' curves. On the contrary, it helps to improve convergence speed in addition to
decrease cost. Also, the use of ISP consistency for large models such as ML-$20$M has been vital to ensure fast convergence
for the few initial seconds.
\medskip

\noindent\textbf{Main insight.} As a key conclusion, we find that \emph{FaaS can be more performant than IaaS}  under the same conditions (i.e., number of workers and memory per worker) for, at least,
fast-convergence models if ML training  is specialized to FaaS architectures.

\subsubsection{Cost comparison.} 

As shown above in \S\ref{sec:performance}, FaaS can outperform IaaS-based training. However, the price/time unit of a serverless worker is typically higher than IaaS-based worker. 
As given in Table~\ref{tab:pricing}, a PyTorch worker costs $\frac{0.2\$/\textrm{hour}}{4} = 0.05\$/\textrm{hour}$, which is more than two times cheaper than a serverless worker: $0.122$ \$/hour. 
Therefore, a better cost-efficiency for FaaS-based ML training is a priori more difficult, but plausible, mostly because of the possibility to dynamically adjust the number of workers, among
other abilities. 

Following the same path traced above, here we 
compare  \mlless~against PyTorch and PyWren-IBM in terms of cost. We extract the cost of
each system from the executions in the prior evaluation to ease cross comparison. 
\medskip

\noindent\textbf{General results.} As a headline observation, \mlless~is cheaper than PyTorch in all applications, but the improvement gap is not as big
as in the performance dimension. For example, when training \textbf{PMF} on\textbf{ ML-$20$M}, \mlless~spends
 \$$0.0948$ to reach a loss of $0.82$, compared to the \$$0.6$ invested by PyTorch. This leads to a $6.32$X savings on cost.
Likewise, PyTorch spends \$$0.667$ to achieve to a loss value of $0.738$ for the  \textbf{PMF+ML-$10$M} job, while 
\mlless~cuts this cost to \$$0.1348$, $4.94$X cheaper than PyTorch. 
\medskip

\noindent\textbf{Fixed-budget cost results.} While \mlless\hphantom{} saves money, for some users the ``pay-as-you-go'' model is in conflict with the way they 
manage their budgets. For instance, these may be fixed in advance. Therefore, it is interesting to examine what
would be the performance of \mlless\hphantom{} for a fixed budget. To answer this question, Fig.~\ref{fig:cost}
illustrates to what extent each system is able to converge under a fixed budget in dollars. The numbers above the
bars report the maximum execution time affordable with each possible budget. 

As can be seen in the figure, \textbf{\mlless~+ All} provides the best cost-performance trade-off in all
applications, even for the tiny budget of $9$ cents. Non-surprisingly, PyTorch is able \textit{to run longer
than the rest of systems due to the lower pricing of the rented VM instances}. For the largest budget, it 
even doubles the maximum execution time affordable~by \mlless. Per contra, \mlless\hphantom{} is significantly more efficient per time unit
and better adjusts to the cost plan. We note that the auto-tuner helps to gain some extra seconds, up to $115$
seconds, as shown by the \textbf{\mlless~+ All}-labeled bars. This is another experimental evidence of the economic
utility of our scale-in auto-tuner. 

The minimal exception to the above rule of thumb is for \textbf{LR+Criteo dense} (Fig.~\ref{fig:14}).  For this task, Pytorch is able to deliver the 
best performance for the $9$\cents- and $18$\cents-budgets, mostly because of its optimal, ring-based  \texttt{All-reduce} primitive for dense data.
Fortunately, by leveraging the combined effect of the scale-in auto-tuner and ISP, \mlless\hphantom{} manages to incrementally improve the cost-efficiency ratio, 
achieving a lower BCE value for $36$\cents.
\medskip

\noindent\textbf{Main insight.} As a main takeaway,  \emph{FaaS can be more cost-efficient than IaaS} for fast-convergent models if  FaaS-based model training  is crafted for
the serverless environment. 

\subsection{SSP vs. ISP for FaaS-based ML training}

Due to the  need of indirect communication in FaaS-based ML training, another interesting question is to ascertain whether ISP is better suited for serverless model training
than other popular yet loose consistency models such as SSP \cite{ssp}. To this goal, we integrated SSP into \mlless, and compared it with ISP, as well as with
the baseline BSP-based version. To carry out this comparison, we experimented with PMF on the \textbf{ML-20M} dataset for an increasing number of workers $P$. To not compromise
statistical efficiency due to weak scaling (see \S\ref{sec:tuner} and Table~\ref{tab:statistical} for further details),  we fixed the global batch size $B_g$ and adjusted the mini-batch size $B$
accordingly. Concretely, we set $B=12$K for $12$ workers, $B=6$K for $24$ workers and $B=3$K for $48$ workers. In this way, we made sure that the effect of stale updates came out neatly for each synchronization model. For SSP, we set a slack of $s  = 3$ iterations. 
\medskip

\begin{figure}
\centering
 \includegraphics[width=0.33\textwidth]{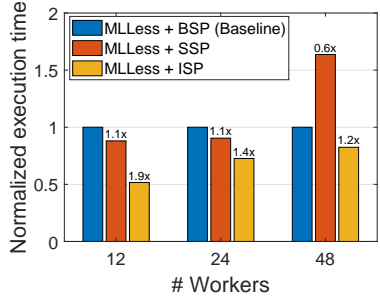}
\caption{Comparison of \textbf{SSP} against \textbf{ISP} synchronization model for serverless ML training.}
\label{fig:ssp}
\vspace{-10pt}
\end{figure}

\noindent\textbf{Results.} The results are depicted in Fig.~\ref{fig:ssp}. As expected, SSP shows a better average speedup of $1.1$X over the default BSP implementation
for $12$ and $24$ workers. For $48$ workers, SSP, however, performs worse than the synchronous~BSP model due to the lack of intra-function parallelism. More technically, 
SSP is agnostic to the computation capacity of workers, but merely ensures that the number of iterations between the fastest and the slowest workers does not exceed the \emph{staleness bound} $s$.
In a distributed setting such as that of  \mlless, where each worker is responsible to aggregate the updates from the rest, i.e., there are no global parameters, the lack of intra-function parallelism
means there is no way for the slowest workers to hide the latency of downloading and applying the missing updates. A solution to this problem would be to use a serverless backend such as 
Crucial~\cite{crucial} to perform the storage-side aggregation of gradients.

The more relevant finding of this experiment is, however, that ISP outperforms SSP in all cases, yielding a speedup of $1.9$X and $1.4$X for $12$ and $24$ workers, respectively. The reason why
ISP is way better than SSP is that ISP permits any worker to delay the synchronization of a parameter \textit{indefinitely} as long as its aggregated update is non-significant. Under SSP, however, update synchronization is delayed up to most $s$ iterations for the fastest workers, but sooner or later, all the committed updates from the workers are added to the model replicas, thus not reducing
the communication overhead at all. Put another way, the loose synchronization property of SSP is not enough to outweigh the reduction in network traffic achieved by ISP, being the latter more
effective to yield faster convergence for FaaS-based ML training due to the need of indirect communication. 
\medskip

\noindent\textbf{Main insight.} For FaaS-based model training, where the exchange of intermediate updates is the main limiting factor,  \emph{parameter staleness is not of practical utility unless it saves
network traffic}.

\subsection{Scalability}

Finally, scalability is a critical property of any ML training system, and \mlless\hphantom{} is not the exception. 
Cloud object storage as a means to hold mini-batches has proven to provide good scalability~\cite{cirrus}, and we empirically found
that the signaling channel was able to support thousands of messages per second, enough to scale to hundreds of concurrent workers. 
Certainly, among all the \mlless\hphantom{} components, we observed that the indirect communication channel built to exchange updates is the one subjected to a major strain.  
Fortunately, this channel can be easily ``scaled out''~by~adding more Redis instances and ``sharding'' intermediate updates over the pool of servers using the worker IDs. To verify~this
claim, we ran multiple training jobs with the \textbf{ML-20M} dataset for an increasing number of workers. For each worker size, we trained the model with $1$ and $2$ Redis instances,
stopping at a RMSE value of $0.77$ in all settings. To preserve statistical efficiency, i.e., a similar convergence progress per second, we adjusted the batch size as in the prior test. 
\medskip
 
\begin{figure*}
\centering
\subfloat[][Effect of Redis instances.]{
  \includegraphics[width=0.325\textwidth]{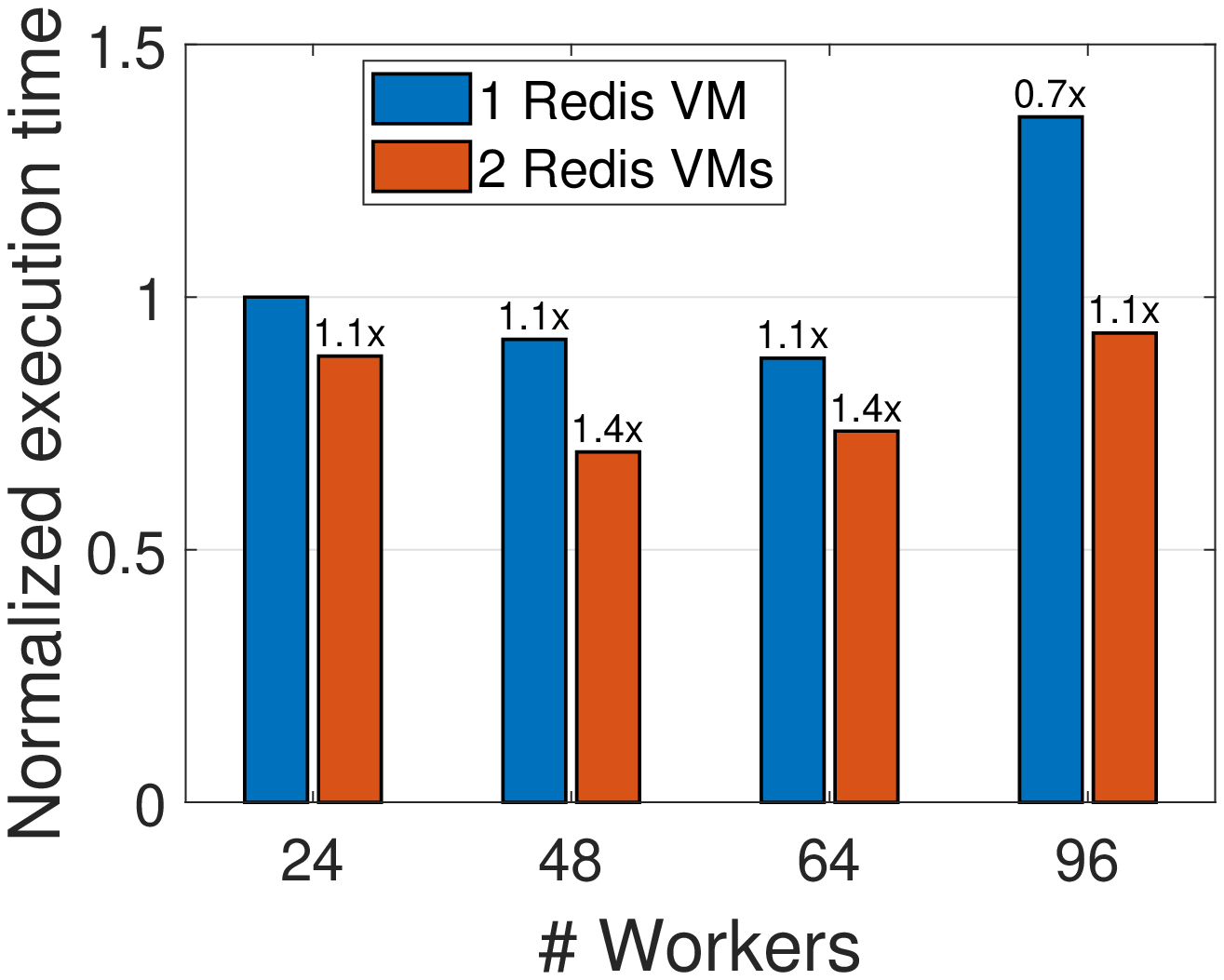}
	\label{fig:20a}
}
\subfloat[][Statistical efficiency vs. \# of workers.]{
  \includegraphics[width=0.325\textwidth]{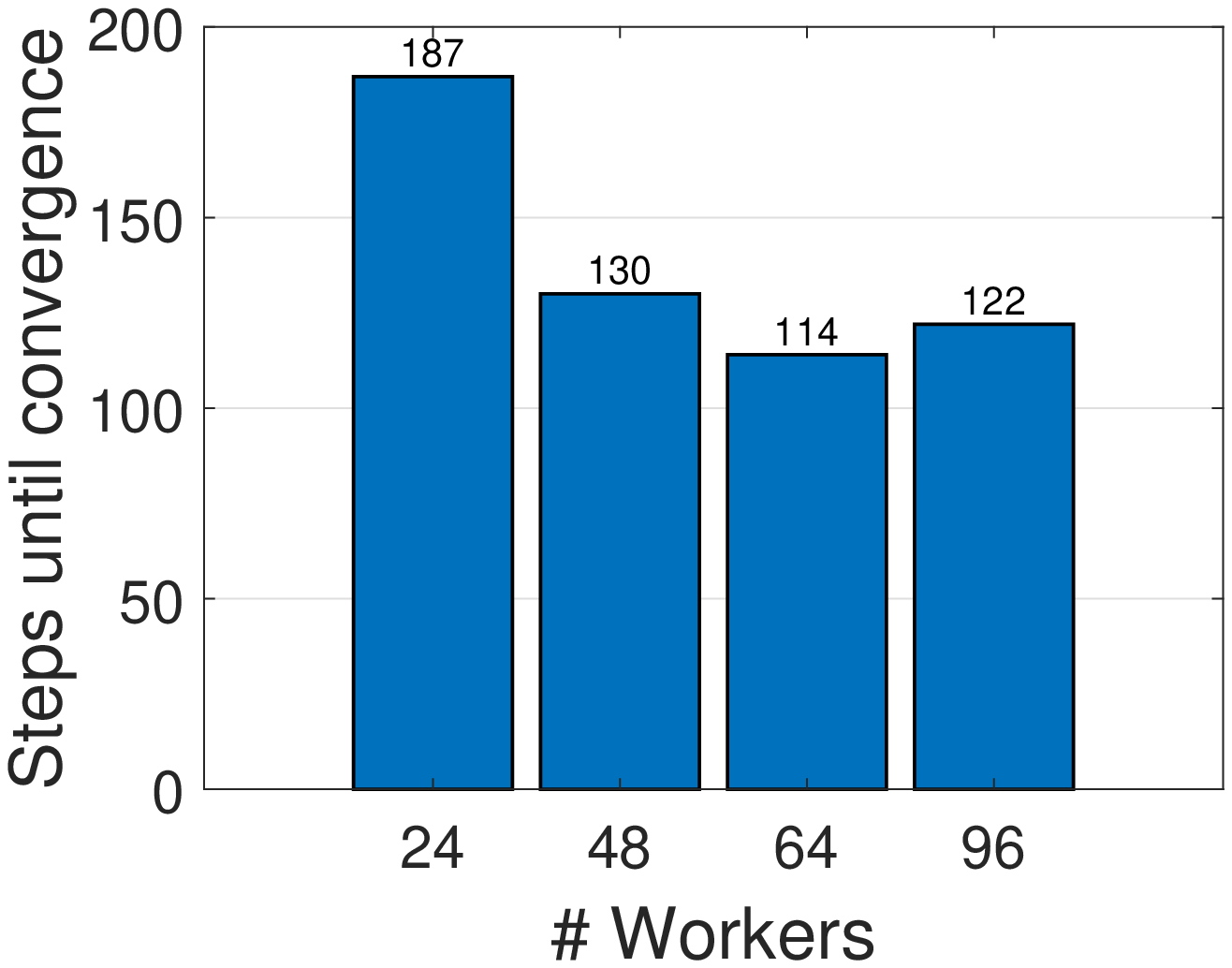}
	\label{fig:20b}
}
\caption{Scalability of \mlless\hphantom{} on the \textbf{ML-20M} dataset.}
\vspace{-10pt}
\end{figure*}

\noindent\textbf{Results.} The scalability results are illustrated in Fig.~\ref{fig:20a}. For ease of comparison, we normalized the execution times, choosing the configuration of $24$ workers with $1$ Redis server as the baseline. Non-surprisingly, doubling the number of Redis instances has almost no effect for a small number of workers. Nonetheless, its effect becomes more apparent as the number of workers increases and a single server cannot keep up with the high rate of updates. With two Redis servers, \mlless\hphantom{} is able to deliver a speedup of $1.4$X  for $64$ workers, despite a super-linear increase in the relation of the number of workers to the Redis servers with respect to the baseline setup, i.e., $\frac{64~\mathrm{\#workers}}{2~\mathrm{\#Redis~servers}}/\frac{24~\mathrm{\#workers}}{1~\mathrm{\#Redis~servers}} = 1.33$. Likewise for $96$ workers, the execution time is a $10\%$ better with $2$ Redis servers,  despite featuring $4$X more number of workers than the baseline setup. 
This confirms that the addition of more servers enables \mlless\hphantom{} to scale to a larger number of workers. 

It is worth to mention here that as in IaaS-based ML training,  the scaling of the training process in FaaS platforms~is equally challenging. Simply put, users expect the training time to go down 
with the number of workers. However,~even if the sharing of updates is not a bottleneck with the addition of more Redis instances, the relation between the mini-batch size and the number of workers may hinder \textit{statistical efficiency}. This is reflected in Fig.~\ref{fig:20b}, where the number of training steps until reaching the threshold is shown. We use the number of training steps,  instead of time,~as~a~metric, because the total number of steps to convergence is independent of the size of the Redis cluster. It is a quality measure that depends on the global batch size and the number of workers, which captures very well how frequently~the~workers synchronize in relation to the processed training data.

As seen in this figure, the number of iterations to convergence decreases until $96$
workers, the point beyond which adding more workers starts hurting convergence. At this point, adding more Redis servers can help reduce I/O time, so as the training time. But  eventually, the increase in the number of iterations caused by the usage of more workers will limit scalability. For this reason, users try to compensate this reduction in statistical efficiency
by either increasing the learning rate \cite{lr}, or adjusting the batch size adaptively~\cite{2018dont}.
\medskip

\noindent\textbf{Main insight.}  We conclude that \mlless\hphantom{} is scalable, and that can be easily ``scaled out'' through in-memory storage sharding. As in traditional VM-based systems, the ultimate scaling of the training process depends on the mini-batch size, the synchronization model, etc., irrespective of whether storage sharding can eliminate the bottlenecks.

\section{Related Work}
\label{sec:rw}

\noindent\noindent{\textbf{Serverless Data Processing.}} A large bulk of
previous works have proposed high-level frameworks for running large-scale analytics on serverless functions. For example, PyWren~\cite{pywren}, IBM-PyWren\cite{pywrenibm} and Lithops~\cite{multicloud, lithops} are map-reduce frameworks running over FaaS executors that take advantage of object storage to store intermediate data. Lithops~\cite{multicloud, lithops} is multi-cloud and also implements the 
native \texttt{multiprocessing} module available in Python to enable the transparent execution of multiprocessing applications over FaaS platforms. Further,
\texttt{gg}~\cite{gg} is a library that uses AWS Lambda for CPU-bound intensive jobs such as video-encoding. Numpywren~\cite{numpywren} is an elastic linear 
algebra library on top of a pure serverless architecture. Starling \cite{starling} proposes a serverless query execution engine.
Serverless ML systems, including \mlless, build upon the lessons learned from these works to increase their performance and cost-efficiency.

~Another important work is Crucial~\cite{crucial, crucial22}. Crucial is a framework for building stateful FaaS-based multi-threaded applications, and as such,
it includes fine-grained synchronization primitives such as semaphores and barriers.  As part of its evaluation, Crucial was compared
to Spark using two classical ML algorithms: K-means and logistic regression, showing an on-par performance with Spark.  Although Crucial is not cost-efficient per se, 
we believe that it would be~a good option to implement a parameter server-like interface for server ML training. 
 \smallskip

\noindent\textbf{Serverless ML.}  A number of works have been devoted to leveraging 
FaaS platforms for building ML systems. Since ML model inference is a representative
use case of serverless computing~\cite{stratum, vatche18}, recent research efforts
have been directed towards model training~\cite{cirrus, siren, lambdaml}. 
All these works use AWS Lambda, which confers them some advantage over \mlless,
and make direct comparison problematic. First, AWS Lambda enables multi-threaded
parallelism~\cite{lambada, cirrus}, while  IBM Cloud Functions are limited to $1$vCPU at most. Also,  AWS Lambda
workers can access $10$GB of local RAM, which allows them to hold larger data partitions and mini-batches
compared with \mlless\hphantom{} that is restricted to $2$GB of memory.
Despite this, \mlless\hphantom{} outweighs these limitations and manages
to deliver speedups superior to $15$X and with $6.3$X lower cost than PyTorch, and very importantly,
\textit{excluding the start-up time,} which is longer in PyTorch (e.g., a cluster of $6$ VMs takes
$>1$ min. to boot up). 

To put in a nutshell, Cirrus~\cite{cirrus} is a serverless ML system that implements a parameter server (PS)~\cite{parameter}
on top of VMs, where all FaaS workers communicate with this centralized PS layer. Such a hybrid design
has its merit, mainly because the ability of PS servers of doing computation delivers $200\%$ communication
savings compared with indirect communication via external storage. According to~\cite{berkeley}, 
Cirrus is $3$X-$5$X faster than VMs, but up to $7$X more costly. Compared to \mlless, Cirrus is thus not cost-efficient, mostly because 
it does not exploit well the definitory  properties of the FaaS model such as ``pay-per-usage'', which allows to save money through
the fine-grained, dynamic allocation of serverless workers.

\textsc{Siren}~\cite{siren} presents an asynchronous ML framework,  where each worker 
runs independently, i.e., it reads a (stale) model from remote storage (e.g., AWS S$3$), updates it
with a mini-batch of local data, writing the  new model back to storage. Its major strength is
withal its scheduler built upon reinforcement learning (RL) that
adjusts the number of workers dynamically, subject to a certain budget. Compared to
\mlless, its scheduler is more coarse-grained as it adjusts the number of workers once
per epoch, and achieves a lower cost-efficient ratio. Concretely, \textsc{Siren} reduces 
job execution time by up to $44.3\%$ at the same cost than EC$2$ clusters.

Finally, \cite{lambdaml} proposes LambdaML, a FaaS-based training system to 
determine the cases where FaaS holds a sway over IaaS. Interestingly, our results mirrors their observation that
FaaS is more cost-efficient for models that quickly converge. Unlike LambdaML, however, we reach the same conclusion
by getting out of the equation start-up times. That is, if the start-up time is excluded, LamdaML is slower
than PyTorch, while \mlless~outperforms PyTorch, yet being cheaper. In this sense, we believe that \mlless\hphantom{} opens
the door to the adoption of serverless ML training as a truly cost-efficient option in the cloud.

\section{Conclusion and Future Work}
\label{sec:end}

We have examined the question of whether serverless ML
training can be more cost-effective than traditional  IaaS-based 
computing. To answer this question, we have developed  \mlless, a prototype system 
of FaaS-based ML model training built on top of IBM Cloud Functions, and empowered it with two
new optimizations: one aimed to reduce communication bandwidth, the other intended to
exploit~the essential qualities of the FaaS model to jointly decrease cost and execution time. 
Our results demonstrate that \mlless\hphantom{} is more cost-efficient than serverful ML libraries at a lower cost
for ML models with fast convergence.  We also validate the scalability of \mlless, and the benefits of loose
synchronization models that allow to smoothly trade off communication bandwidth and convergence
time.

The cost-effectiveness of serverless ML training suggests a variety of potential future work. An interesting avenue
of research would be to investigate the advantage of supporting ML-specific logic on the server side through
serverless data stores (e.g., Crucial~\cite{crucial, crucial22}). Another topic would be to adapt \mlless\hphantom{} to Federated Learning
(FL) environments. A commonplace practice in FL is to select a random subset of the available clients in each training step, which results in 
many clients staying idle for a long time. By extending \mlless\hphantom{} to run the clients as functions on edge devices only when needed,
it would be possible to improve cost-efficiency. A final research topic would be to examine the potential effects of lossy gradient compression techniques
such as gradient quantization on serverless ML training.
\medskip

\section*{Acknowledgments}
This work has been partially supported by EU (No. $825184$) and 
Spanish Government (No. PID2019-106774RB-C22).  Marc S\'{a}nchez-Artigas is a Serra H\'{u}nter Fellow. 
Pablo Gimeno is a Mart\'{i} Franqu\`{e}s Research Grant Fellow - Banco Santander Edition.

\bibliography{mlless}

\appendix
\section{Convergence Analysis}
\label{sec:analysis}

In this section, we show that the SGD algorithm for convex objectives is ensured to converge
under our consistency model. Recall that a step $t$ of the SGD algorithm is defined as: 
\[
\mathbf{x}_t=\mathbf{x}_{t-1} -  \eta_t\nabla f_t(\mathbf{x}_t) =\mathbf{x}_{t-1} - \eta_t \mathbf{g}_t= \mathbf{x}_{t-1} + \mathbf{u}_t,
\]

where $n_t$ is the step size and $\mathbf{u}_t := -\eta_t \mathbf{g}_t$ is the update of step $t$.

As described in Section~\ref{sec:sf},  the accumulated updates are  broadcast to the rest of workers only in the case 
that~they are significant. This means that significant updates are always seen by all workers. Nevertheless, insignificant
updates remain local to the workers, so different workers will ``see'' different, noisy versions of the true state $\mathbf{x}_t$.

To formally capture the difference between the ``true'' state $\mathbf{x}_t$ and the noisy views, 
let  us  define  an  order  of  the  updates  up  to  step $t$. Suppose that the algorithm is distributed across $P$ 
workers, and the logical clocks that mark progress start at $0$. Then, 
\[
\mathbf{u}_t := \mathbf{u}_{p, c} := \mathbf{u}_{t \bmod P , \left\lfloor \frac{t}{P} \right\rfloor},
\]
defines a mapping between step $t$ and $[0, P-1] \times \mathbb{N}_0$, which loops through
 clocks  ($c=\left\lfloor \frac{t}{P} \right\rfloor$), and  for  each  clock $c$  loops  through 
workers ($p = t \bmod P$).

We now define a reference sequence of states that a single-worker serial execution would follow 
if the updates were to be seen under the above ordering: $\mathbf{x}_t = \mathbf{x}_0 + \sum^{t^\prime}_{t=0} \mathbf{u}_t$.
Let $\mathcal S_{p, c}$ denote the set of significant updates propagated by worker $p$ up through clock $c$. 
Similarly, let $\mathcal I_{p, c}$ denote the set of the insignificant updates up through clock $c$ not broadcast from worker $p$.
Clearly, $\mathcal S_{p, c}$ and $\mathcal I_{p, c}$ are disjoint, and their union includes all the updates
accumulated by $p$ until exactly clock $c$.

Using the above notation, we define the noisy view $\widetilde{\mathbf{x}}_t$ as:
\begin{equation}
 \label{eq:1}
\widetilde{\mathbf{x}}_{p, c} := \mathbf{x}_0 +  \sum^{c}_{c^\prime=0}\mathbf{u}_{p, c} +   \sum_{p^\prime \neq p} \sum_{i\in \mathcal{S}_{ p^\prime, c}} \mathbf{u}_i ,
\end{equation}

where $\mathbf{x}_0$ are the initial parameters, the second term refers to the local updates applied by worker $p$, and the last term aggregates all the significant
updates shared by the rest of workers other than $p$. 

Finally, by using Eq.~(\ref{eq:1}), the difference between the ``true'' view $\mathbf{x}_t$ and the noisy view 
$\widetilde{\mathbf{x}}_t$ becomes:
\begin{equation}
\label{eq:2}
\widetilde{\mathbf{x}}_t - \mathbf{x}_t = \widetilde{\mathbf{x}}_{p, c} - \mathbf{x}_t = \widetilde{\mathbf{x}}_{t \bmod P , \left\lfloor \frac{t}{P} \right\rfloor} - \mathbf{x}_t = -\sum_{p^\prime \neq p} \sum_{i\in \mathcal{I}_{ p^\prime, c}} \mathbf{u}_i
\end{equation}
Equipped with Eq.~(\ref{eq:2}), we are now ready to start the proof~of Theorem~\ref{thm:1}.

Similarly to~\cite{ssp}, the Insignificance-bounded Synchronous Parallel (ISP) generalizes the BSP model:
\begin{cor}
For zero significance threshold $v = 0$, ISP reduces to BSP.
\end{cor}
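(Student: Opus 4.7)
The plan is to show directly that when $v=0$, every non-zero accumulated update becomes significant and is therefore propagated immediately, so the noisy view at each worker coincides with the true serial view, which is exactly the guarantee provided by BSP.

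First I would unwind the definition of the per-parameter significance threshold $v_t = v/\sqrt{t}$ at $v=0$, obtaining $v_t = 0$ for every $t \ge 1$. Plugging this into the filter condition $|\delta_{i,t}/\widetilde{x}_{i,t}| > v_t$, the condition reduces to $\delta_{i,t} \neq 0$. Thus any non-zero per-parameter accumulated update is flagged significant at the very step it is generated, so $t_{p_i}$ is reset at every step for which the update is non-zero, and the accumulator is pushed to the rest of workers on the same step. Updates that are exactly zero need not be communicated since they contribute nothing to either the local or the global state.

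Next I would invoke the formal decomposition of $\widetilde{\mathbf{x}}_t$ given in Eq.~(\ref{eq:1}) together with the deviation formula in Eq.~(\ref{eq:2}):
\[
\widetilde{\mathbf{x}}_t - \mathbf{x}_t = -\sum_{p^\prime \neq p} \sum_{i\in \mathcal{I}_{p^\prime, c}} \mathbf{u}_i.
\]
By the previous step, the set $\mathcal{I}_{p^\prime, c}$ of insignificant (non-propagated) updates contains only vanishing contributions, so the right-hand side is the zero vector. Hence $\widetilde{\mathbf{x}}_t = \mathbf{x}_t$ at every worker and every step. Combined with the fact, already emphasized in \S\ref{sec:sf}, that ISP is synchronous in nature (all workers wait at the end of each step before proceeding), this is precisely the BSP contract: at the start of iteration $t$, every worker sees the global state consistent with all updates through iteration $t-1$.

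The only subtlety I anticipate is the edge case $\widetilde{x}_{i,t} = 0$, where the ratio in the filter is ill-defined; a brief remark is needed to fix the convention (e.g., treat any non-zero $\delta_{i,t}$ as significant whenever the denominator vanishes), after which the argument above goes through verbatim. No other obstacle arises, since the corollary is essentially a sanity check that the ISP family smoothly degenerates to BSP as the tolerance parameter is driven to zero.
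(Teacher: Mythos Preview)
Your proposal is correct and follows essentially the same approach as the paper: both argue that $v=0$ forces every (non-zero) update to be significant, so the insignificant-update sets contribute nothing and the noisy view coincides with the true serial state, which together with the per-step synchronization of ISP yields BSP. Your treatment is in fact more careful than the paper's terse one-line proof, since you distinguish between $\mathcal{I}_{p^\prime,c}$ being literally empty versus containing only zero updates, and you flag the ill-defined ratio when $\widetilde{x}_{i,t}=0$; the paper simply asserts $\mathcal{I}_{p,c}=\emptyset$ without addressing either subtlety.
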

\begin{proof}
Observe that $v = 0$ implies that the set $\mathcal I_{p, c} = \emptyset$ at all clocks, so that 
 $\sum_{p^\prime \neq p} \sum_{i\in \mathcal{S}_{ p^\prime, c}} \mathbf{u}_i = \sum^{c}_{c^\prime=0} \sum_{p^\prime < p} \mathbf{u}_{p^\prime, c^\prime}$.
Therefore, $\widetilde{\mathbf{x}}_{p, c}$ exactly consists of all updates until the current clock.
\qed\end{proof}

\begin{customthm}{1}
Suppose we want to find the minimizer $\mathbf{x}^*$ of a convex function $f(\mathbf{x}) =\sum^T_{t=1} f_t(\mathbf{x})$ (components $f_t$ are also convex) via
SGD on one component $\nabla f_t$ at a time. Also, the algorithm is replicated across $P$ workers with synchronization at every step $t$. Let
$\mathbf{u}_t := - \eta_t \nabla f_t(\widetilde{\mathbf{x}}_t)$, where the step size $\eta_t$ decreases as $\eta_t=\frac{\eta}{\sqrt{t}}$. As per-parameter significance
filter, we use $\left| \frac{\delta_{i, t}}{\widetilde{x}_{i, t}}  \right| > v_t$,  where $\widetilde{x}_{i, t}$ is the $i^\mathrm{th}$ parameter of the noisy
state $\widetilde{\mathbf{x}}_t := (\widetilde{x}_{0, t}, \widetilde{x}_{1, t}, \dots, \widetilde{x}_{n, t})$ at step $t$, $\delta_{i, t} := \sum^t_{t^\prime = t_{p_i}} u_{i, t^\prime}$ denotes the
accumulated update for the $i^\mathrm{th}$ parameter since the last propagation time $t_{p_i}$, and $v_t$ is the significance threshold that decreases as $v_t=\frac{v}{\sqrt{t}}$.
Then, under suitable conditions: $f_t$ are $L$-Lipschitz and the distance between any $\mathbf{x}$, $\mathbf{x}^\prime$ in the parameter space
$D(\mathbf{x}, \mathbf{x^\prime}) \leq \Delta^2$ for some constant $\Delta$:
\[
R[X] := \sum^T_{t=1} f_t(\widetilde{\mathbf{x}}_t) - f_t(\mathbf{x}^*) = \mathcal{O}\left(\sqrt{T}\right), 
\]
and thus $\lim_{T \to \infty}\frac{R[X]}{T} = 0$.
\end{customthm}

\begin{proof} We follow the proof of~\cite{ssp}.  Define $D(\mathbf{x}, \mathbf{x^\prime}) := \frac{1}{2} \norm{\mathbf{x} - \mathbf{x}^\prime}^2$, where $\norm{\cdot}$ is
the $\ell_2$-norm. Because $f_t$ are convex, we have:
\begin{equation*} 
R[X] := \sum^T_{t=1} f_t(\widetilde{\mathbf{x}}_t) - f_t(\mathbf{x}^*)  \leq \sum^T_{t=1} \left\langle  \widetilde{\mathbf{g}_t},  \widetilde{\mathbf{x}}_t - \mathbf{x}^*\right\rangle.
\end{equation*}
The high level idea is to show that $R[X] = \mathcal{O}\left(\sqrt{T}\right)$, which means $\mathbb{E}_t\left[f_t(\widetilde{\mathbf{x}}_t) - f_t(\mathbf{x}^*)\right] \rightarrow 0$, thus convergence. First, we shall something about the term $\left\langle  \widetilde{\mathbf{g}_t},  \widetilde{\mathbf{x}}_t - \mathbf{x}^*\right\rangle$.
\begin{lem} 
If $X = \mathbb{R}^n$, then for all $t > 0$: 
\begin{align*} 
 \left\langle  \widetilde{\mathbf{x}}_t - \mathbf{x}^*, \widetilde{\mathbf{g}_t} \right\rangle &= \frac{1}{2} \eta_t \norm{\widetilde{\mathbf{g}_t}}^2 
+ \frac{D(\mathbf{x}^*, \mathbf{x}_t)-D(\mathbf{x}^*, \mathbf{x}_{t+1})}{\eta_t} \\
&+ \sum_{p^\prime \neq p} \left[-\sum_{i\in \mathcal{I}_{ p^\prime, c}} \eta_i \left\langle  \widetilde{\mathbf{g}_i},  \widetilde{\mathbf{g}_t}\right\rangle \right].
\end{align*}
\label{lem:1}
\end{lem}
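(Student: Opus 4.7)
The plan is to follow the classical SGD regret-decomposition template, working first on the ``reference'' serial trajectory $\{\mathbf{x}_t\}$ and then bridging to the noisy views $\{\widetilde{\mathbf{x}}_t\}$ via Eq.~(\ref{eq:2}). The reference update reads $\mathbf{x}_{t+1} = \mathbf{x}_t + \mathbf{u}_t = \mathbf{x}_t - \eta_t \widetilde{\mathbf{g}_t}$, since the gradient is always evaluated at the noisy state actually seen by the computing worker. Expanding the squared-norm definition of $D$ and collecting terms immediately gives
\begin{equation*}
D(\mathbf{x}^*, \mathbf{x}_{t+1}) = D(\mathbf{x}^*, \mathbf{x}_t) - \eta_t \left\langle \mathbf{x}_t - \mathbf{x}^*,\, \widetilde{\mathbf{g}_t}\right\rangle + \tfrac{1}{2}\eta_t^2 \left\|\widetilde{\mathbf{g}_t}\right\|^2,
\end{equation*}
which rearranges to
\begin{equation*}
\left\langle \mathbf{x}_t - \mathbf{x}^*,\, \widetilde{\mathbf{g}_t}\right\rangle = \tfrac{1}{2}\eta_t \left\|\widetilde{\mathbf{g}_t}\right\|^2 + \frac{D(\mathbf{x}^*, \mathbf{x}_t) - D(\mathbf{x}^*, \mathbf{x}_{t+1})}{\eta_t}.
\end{equation*}
These are precisely the first two summands appearing on the right-hand side of the lemma.

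Next I would bridge from the reference sequence to the noisy view through the trivial splitting
\begin{equation*}
\left\langle \widetilde{\mathbf{x}}_t - \mathbf{x}^*,\, \widetilde{\mathbf{g}_t}\right\rangle = \left\langle \mathbf{x}_t - \mathbf{x}^*,\, \widetilde{\mathbf{g}_t}\right\rangle + \left\langle \widetilde{\mathbf{x}}_t - \mathbf{x}_t,\, \widetilde{\mathbf{g}_t}\right\rangle,
\end{equation*}
and substituting Eq.~(\ref{eq:2}) into the second inner product. Writing each pending update as $\mathbf{u}_i = -\eta_i \widetilde{\mathbf{g}_i}$ turns $\widetilde{\mathbf{x}}_t - \mathbf{x}_t$ into a linear combination of the insignificant gradients currently held back by the other workers, producing the residual term $\sum_{p' \neq p}\bigl[-\sum_{i \in \mathcal{I}_{p', c}} \eta_i \left\langle \widetilde{\mathbf{g}_i}, \widetilde{\mathbf{g}_t}\right\rangle\bigr]$ in exactly the form claimed. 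Adding the two pieces closes the identity.

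The calculation is purely algebraic, so the \emph{main obstacle is bookkeeping rather than analysis.} One has to be careful with (i)~the sign convention relating $\mathbf{u}_i$, $\eta_i$ and $\widetilde{\mathbf{g}_i}$ when composing Eq.~(\ref{eq:2}) with the definition $\mathbf{u}_i = -\eta_i \widetilde{\mathbf{g}_i}$, and (ii)~the translation between the serial index $t$ and the worker--clock pair $(p, c) = (t \bmod P,\, \lfloor t/P \rfloor)$, so that $\mathcal{I}_{p', c}$ really enumerates the step indices whose updates worker $p'$ has not yet broadcast by the time worker $p$ computes $\widetilde{\mathbf{g}_t}$. Once these identifications are in place, the lemma becomes a two-line rearrangement, and it is ready to feed into the $\mathcal{O}(\sqrt{T})$ regret bound of Theorem~\ref{thm:1}: when summed over $t$, the $\tfrac{1}{2}\eta_t \|\widetilde{\mathbf{g}_t}\|^2$ term is controlled by $L$-Lipschitzness, the Bregman terms telescope and are capped by $\Delta^2/\eta_T$, and the residual inner products are bounded using Cauchy--Schwarz together with the per-parameter bound $|\delta_{i,t}| \leq v_t |\widetilde{x}_{i,t}|$ enforced by the significance filter and the decaying schedule $v_t = v/\sqrt{t}$.
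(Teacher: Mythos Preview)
Your proposal is correct and follows essentially the same route as the paper: expand $D(\mathbf{x}^*,\mathbf{x}_{t+1})-D(\mathbf{x}^*,\mathbf{x}_t)$ using $\mathbf{x}_{t+1}=\mathbf{x}_t-\eta_t\widetilde{\mathbf{g}_t}$, split the resulting inner product as $\langle \mathbf{x}_t-\mathbf{x}^*,\widetilde{\mathbf{g}_t}\rangle=\langle \mathbf{x}_t-\widetilde{\mathbf{x}}_t,\widetilde{\mathbf{g}_t}\rangle+\langle \widetilde{\mathbf{x}}_t-\mathbf{x}^*,\widetilde{\mathbf{g}_t}\rangle$, and then replace $\mathbf{x}_t-\widetilde{\mathbf{x}}_t$ by the sum of withheld updates from Eq.~(\ref{eq:2}) with $\mathbf{u}_i=-\eta_i\widetilde{\mathbf{g}_i}$. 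Your caution about sign bookkeeping is well placed, but otherwise there is nothing to add.
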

\begin{proof}
\begin{align*} 
D(\mathbf{x}^*&, \mathbf{x}_{t+1})-D(\mathbf{x}^*, \mathbf{x}_t) = \frac{1}{2}\norm{\mathbf{x}^* - \mathbf{x}_{t} + \mathbf{x}_{t} - \mathbf{x}_{t+1}} 
-\frac{1}{2}\norm{\mathbf{x}^* - \mathbf{x}_{t}}  \\
&= \frac{1}{2}\norm{\mathbf{x}^* - \mathbf{x}_{t} + \eta_t \widetilde{\mathbf{g}_t}} 
-\frac{1}{2}\norm{\mathbf{x}^* - \mathbf{x}_{t}} \\
&=  \frac{1}{2} \eta^2_t \norm{\widetilde{\mathbf{g}_t}}^2 - \eta_t \left\langle \mathbf{x}_{t} -  \mathbf{x}^*, \widetilde{\mathbf{g}_t}\right\rangle \\
&=  \frac{1}{2} \eta^2_t \norm{\widetilde{\mathbf{g}_t}}^2  - \eta_t \left\langle \mathbf{x}_{t} -  \widetilde{\mathbf{x}}_t, \widetilde{\mathbf{g}_t}\right\rangle 
- \eta_t \left\langle \widetilde{\mathbf{x}}_t -  \mathbf{x}^*, \widetilde{\mathbf{g}_t}\right\rangle.
\end{align*}
By expanding the second term:
\begin{align*} 
 \left\langle \mathbf{x}_{t} -  \widetilde{\mathbf{x}}_t, \widetilde{\mathbf{g}_t}\right\rangle  &= 
\left\langle  \left[ \sum_{p^\prime \neq p} \sum_{i\in \mathcal{I}_{ p^\prime, c}} \eta_i \widetilde{\mathbf{g}_i}\right],  \widetilde{\mathbf{g}_t}\right\rangle 
= \sum_{p^\prime \neq p} \sum_{i\in \mathcal{I}_{ p^\prime, c}} \eta_i \left\langle \widetilde{\mathbf{g}_i},  \widetilde{\mathbf{g}_t}\right\rangle,
\end{align*}
and moving $\left\langle \widetilde{\mathbf{x}}_t -  \mathbf{x}^*, \widetilde{\mathbf{g}_t}\right\rangle$ to the left, we prove the lemma.
\qed\end{proof}

\noindent Returning to the proof  of the theorem, we use Lemma~\ref{lem:1} to expand the regret $R[X]$:
\begin{align*} 
 R[X] \leq \sum^T_{t=1} \left\langle \widetilde{\mathbf{g}_t},  \widetilde{\mathbf{x}}_t - \mathbf{x}^*\right\rangle = \sum^T_{t=1}  \left( \frac{1}{2} \eta_t \norm{\widetilde{\mathbf{g}_t}}^2 
+  \frac{D(\mathbf{x}^*, \mathbf{x}_t)-D(\mathbf{x}^*, \mathbf{x}_{t+1})}{\eta_t} 
+  \sum_{p^\prime \neq p} \left[-\sum_{i\in \mathcal{I}_{ p^\prime, c}} \eta_i \left\langle  \widetilde{\mathbf{g}_i},  \widetilde{\mathbf{g}_t}\right\rangle \right] \right).
\end{align*}
We now upper-bound each of the terms:
\begin{align}
 \sum^T_{t=1} &\frac{1}{2} \eta_t \norm{\widetilde{\mathbf{g}_t}}^2 \leq \sum^T_{t=1} \frac{1}{2} \eta_t L^2 \hskip 2em (L\mathrm{-Lipschitz~assumption}) \nonumber \\
&= \frac{1}{2} \eta L^2  \sum^T_{t=1} \frac{1}{\sqrt{t}} \leq \eta L^2 \sqrt{T}, \hskip 1em \left(\sum^T_{t=1} \frac{1}{\sqrt{t}} \leq 2 \sqrt{T} \right)
\end{align}
and
\begin{align}
 \sum^T_{t=1} &\frac{D(\mathbf{x}^*, \mathbf{x}_t)-D(\mathbf{x}^*, \mathbf{x}_{t+1})}{\eta_t} = \nonumber \\
&\frac{D(\mathbf{x}^*, \mathbf{x}_1)}{\eta_1} -\frac{D(\mathbf{x}^*, \mathbf{x}_{T+1})}{\eta_T} + \sum^T_{t=2}\left[D(\mathbf{x}^*, \mathbf{x}_t) \left(\frac{1}{\eta_t} - \frac{1}{\eta_{t-1}}\right)\right] \nonumber \\
&\leq \frac{\Delta^2}{\eta} - 0 + \frac{\Delta^2}{\eta} \sum^T_{t=2} \left[\sqrt{t} - \sqrt{t-1}\right] (\mathrm{Bounded~diameter}) \nonumber \\
&= \frac{\Delta^2}{\eta} + \frac{\Delta^2}{\eta} \left[\sqrt{T} - 1 \right] = \frac{\Delta^2}{\eta}\sqrt{T}.
\end{align}
For the last term, we shall use the following lemma:
\begin{lem} 
Let $V$  a normed vector space with norm $\norm{\cdot}$. Let~$\mathbf{y}$ be a vector in $V$ having nonzero entries. For all $\mathbf{x} \in V$, we have 
that $\frac{\norm{\mathbf{x}}}{\norm{\mathbf{y}}} \leq \norm{\mathbf{x} \oslash \mathbf{y}}$, 
where $\oslash$ denotes Hadamard division.
\label{lem:2}
\end{lem}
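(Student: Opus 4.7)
My plan is to recast the inequality as a sub-multiplicativity statement for the Hadamard product. Since every entry of $\mathbf{y}$ is nonzero, the Hadamard division $\mathbf{z} := \mathbf{x} \oslash \mathbf{y}$ is well-defined, and recombining components yields $\mathbf{x} = \mathbf{z} \odot \mathbf{y}$, where $\odot$ denotes the Hadamard product. Multiplying both sides of the target inequality by $\norm{\mathbf{y}}$ thus reduces the claim to showing
\[
\norm{\mathbf{z} \odot \mathbf{y}} \leq \norm{\mathbf{z}} \cdot \norm{\mathbf{y}},
\]
after which substituting back $\mathbf{z} = \mathbf{x} \oslash \mathbf{y}$ and dividing through by $\norm{\mathbf{y}}$ recovers the lemma.

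For the $\ell_2$ norm actually used throughout the convergence analysis, this sub-multiplicative bound follows by direct expansion rather than by any deeper inequality. Squaring the left-hand side gives $\norm{\mathbf{z} \odot \mathbf{y}}^2 = \sum_i z_i^2 y_i^2$, while the right-hand side squared expands into the double sum $\bigl(\sum_i z_i^2\bigr)\bigl(\sum_j y_j^2\bigr) = \sum_{i,j} z_i^2 y_j^2$. The former is precisely the diagonal sub-sum of the latter, and since every off-diagonal term $z_i^2 y_j^2$ with $i \neq j$ is non-negative, dropping these terms can only decrease the total. Taking square roots then delivers the inequality. (Equivalently, one could invoke Cauchy--Schwarz with the vectors of squared coordinates, but the elementary expansion is cleaner.)

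The only real obstacle, which I expect to be more of a statement-level issue than a proof-level one, is the generality claimed for ``a normed vector space $V$''. Hadamard division and Hadamard product are coordinate-wise operations and therefore presuppose a distinguished basis, so the lemma must be read as applying to $V = \mathbb{R}^n$ equipped with a coordinate-compatible norm. Fortunately the elementary argument above generalizes transparently to any $\ell_p$ norm via $\norm{\mathbf{z}\odot\mathbf{y}}_p^p = \sum_i |z_i|^p |y_i|^p \leq \bigl(\sum_i |z_i|^p\bigr)\bigl(\sum_j |y_j|^p\bigr) = \norm{\mathbf{z}}_p^p \norm{\mathbf{y}}_p^p$, and to the $\ell_\infty$ norm via $\max_i |z_i y_i| \leq (\max_i |z_i|)(\max_j |y_j|)$, which covers all norm choices relevant to the rest of the paper. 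No appeal to Lipschitz continuity, convexity, or any property of the SGD iterates is required; the lemma is a pure coordinate-wise arithmetic fact.
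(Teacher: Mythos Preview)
Your proof is correct and takes a genuinely different route from the paper. The paper argues by contradiction: assuming $\norm{\mathbf{x}} > \norm{\mathbf{y}}\,\norm{\mathbf{x}\oslash\mathbf{y}}$, it rewrites the right-hand side as $\norm{\mathbf{x}\oslash(\mathbf{y}/\norm{\mathbf{y}})}$ and declares this a contradiction because the denominator has unit norm. That last step is left implicit; it is really the observation that a unit $\ell_p$ vector has every coordinate of absolute value at most $1$, so Hadamard division by it cannot shrink any coordinate. Your approach instead recasts the claim as sub-multiplicativity of the Hadamard product, $\norm{\mathbf{z}\odot\mathbf{y}}\le\norm{\mathbf{z}}\,\norm{\mathbf{y}}$, and verifies this directly by comparing the diagonal sum $\sum_i |z_i|^p|y_i|^p$ with the full double sum. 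Both arguments ultimately rest on the same elementary fact about nonnegative sums, but yours is fully explicit, works uniformly across the $\ell_p$ family without a separate unit-vector reduction, and correctly flags that the ``general normed space'' phrasing in the statement is really a coordinate-space claim. The paper's version is terser but leaves the crucial inequality in the unit-norm case unjustified.
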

\begin{proof} By contradiction. That is, suppose that $\frac{\norm{\mathbf{x}}}{\norm{\mathbf{y}}} > \norm{\mathbf{x} \oslash \mathbf{y}}$. Then,
$\norm{\mathbf{x}}  > \norm*{\mathbf{x} \oslash \frac{\mathbf{y}}{\norm{\mathbf{y}}}}$, which is a contradiction since  $\frac{\mathbf{y}}{\norm{\mathbf{y}}}$~is a vector of unit norm.
\qed\end{proof}
We are now in position to upper bound the last term. For simplicity, let $\mathbf{s}_{p^\prime, c}  := -\sum_{i\in \mathcal{I}_{p^\prime, c}} \eta_i \widetilde{\mathbf{g}_i}$ and $\abs{\cdot}$ denote
$\ell_1$-norm or \textit{taxicab} norm Then, 
\begin{align}
 \sum^T_{t=1} &\sum_{p^\prime \neq p} \left[-\sum_{i\in \mathcal{I}_{ p^\prime, c}} \eta_i \left\langle  \widetilde{\mathbf{g}_i},  \widetilde{\mathbf{g}_t}\right\rangle \right]  \nonumber \\
&\leq \sum^T_{t=1} \left(P - 1\right) \left\langle \mathbf{s}_{p^\prime, c}, \widetilde{\mathbf{g}_t} \right\rangle \leq \left(P - 1\right) \sum^T_{t=1}  \left| \left\langle  \mathbf{s}_{p^\prime, c}, \widetilde{\mathbf{g}_t} \right\rangle \right| \nonumber \\
& \leq \left(P - 1\right) \sum^T_{t=1} \norm{\mathbf{s}_{p^\prime, c}} \norm{\mathbf{\widetilde{g}}_t} \hskip 1em (\textrm{Cauchy–Schwarz~inequality}) \nonumber \\
& \leq \left(P - 1\right)L \sum^T_{t=1} \frac{\taxicab{\mathbf{s}_{p^\prime, c}}}{\taxicab{\mathbf{\widetilde{x}}_t}} \taxicab{\mathbf{\widetilde{x}}_t} \hskip 2.5em (L\mathrm{-Lipschitz~assumption}) \nonumber \\
&\leq  \left(P - 1\right)L \sum^T_{t=1} \left\lvert {\mathbf{s}_{p^\prime, c}} \oslash {\mathbf{\widetilde{x}}_t} \right\rvert \taxicab{\mathbf{\widetilde{x}}_t}  \hskip 1em (\textrm{Lemma~\ref{lem:2}}) \nonumber \\
&\leq \left(P - 1\right)L  n \sum^T_{t=1} v_t \taxicab{\mathbf{\widetilde{x}}_t} \hskip 2em (\textrm{Non-significance~of~updates}) \nonumber \\
&\leq  \left(P - 1\right) L n  \sum^T_{t=1} v_t \sqrt{n} \norm{\mathbf{\widetilde{x}}_t} \nonumber \\
\end{align}
\begin{align}
& \leq  \left(P - 1\right) L \left(n \sqrt{n}\right) \sum^T_{t=1} v_t \sqrt{2}\Delta \hskip 2em (\mathrm{Bounded~diameter}) \nonumber \\
& \leq  \sqrt{2}\Delta \left(P - 1\right)  L \left(n \sqrt{n}\right)  \sum^T_{t=1} \frac{v}{\sqrt{t}} \nonumber \\ 
& \leq 2\sqrt{2}\Delta\left(P - 1\right)  L \left(n \sqrt{n}\right) v \sqrt{T}. \hskip 1em \left(\sum^T_{t=1} \frac{1}{\sqrt{t}} \leq 2 \sqrt{T} \right)
\end{align}
\noindent Hence,
\begin{align} 
 R[X] \leq \eta& L^2 \sqrt{T} + \frac{\Delta^2}{\eta}\sqrt{T} \nonumber \\ 
&+ 2\sqrt{2}\Delta\left(P - 1\right)  L \left(n \sqrt{n}\right) v \sqrt{T} = \mathcal{O}\left(\sqrt{T}\right),
\end{align}
\noindent and thus, $\lim_{T \to \infty}\frac{R[X]}{T} = 0$, which concludes the proof.
\qed\end{proof}

\end{document}